\newtheorem{proposition}{Proposition}
\def\squareforqed{\hbox{\rlap{$\sqcap$}$\sqcup$}}
\def\qed{\ifmmode\squareforqed\else{\unskip\nobreak\hfil
\penalty50\hskip1em\null\nobreak\hfil\squareforqed
\parfillskip=0pt\finalhyphendemerits=0\endgraf}\fi}
\def\endenv{\ifmmode\;\else{\unskip\nobreak\hfil
\penalty50\hskip1em\null\nobreak\hfil\;
\parfillskip=0pt\finalhyphendemerits=0\endgraf}\fi}
\newenvironment{proof}{\noindent \textbf{{Proof~} }}{\hfill $\blacksquare$}
\newcounter{remark}
\newcounter{example}
\mathchardef\ordinarycolon\mathcode`\:
\def\vcentcolon{\mathrel{\mathop\ordinarycolon}}
\newmdenv[skipabove=7pt,
skipbelow=7pt,
backgroundcolor=darkblue!15,
innerleftmargin=5pt,
innerrightmargin=5pt,
innertopmargin=5pt,
leftmargin=0cm,
rightmargin=0cm,
innerbottommargin=5pt,
linewidth=1pt]{tBox}
\newmdenv[skipabove=7pt,
skipbelow=7pt,
backgroundcolor=blue2!25,
innerleftmargin=5pt,
innerrightmargin=5pt,
innertopmargin=5pt,
leftmargin=0cm,
rightmargin=0cm,
innerbottommargin=5pt,
linewidth=1pt]{dBox}
\newmdenv[skipabove=7pt,
skipbelow=7pt,
backgroundcolor=darkkblue!15,
innerleftmargin=5pt,
innerrightmargin=5pt,
innertopmargin=5pt,
leftmargin=0cm,
rightmargin=0cm,
innerbottommargin=5pt,
linewidth=1pt]{sBox}
\definecolor{darkblue}{RGB}{0,76,156}
\definecolor{darkkblue}{RGB}{0,0,153}
\definecolor{blue2}{RGB}{102,178,255}
\definecolor{darkred}{RGB}{195,0,0}
\newcommand{\nc}{\newcommand}
\nc{\rnc}{\renewcommand}
\nc{\beg}{\begin{equation}}
\nc{\eeq}{{\end{equation}}}
\nc{\beqa}{\begin{eqnarray}}
\nc{\eeqa}{\end{eqnarray}}
\nc{\lbar}[1]{\overline{#1}}
\nc{\bra}[1]{\langle#1|}
\nc{\ket}[1]{|#1\rangle}
\nc{\ketbra}[2]{|#1\rangle\!\langle#2|}
\nc{\braket}[2]{\langle#1|#2\rangle}
\nc{\proj}[1]{| #1\rangle\!\langle #1 |}
\nc{\avg}[1]{\langle#1\rangle}
\nc{\rank}{\operatorname{Rank}}
\nc{\smfrac}[2]{\mbox{$\frac{#1}{#2}$}}
\nc{\tr}{\operatorname{Tr}}
\nc{\ox}{\otimes}
\nc{\dg}{\dagger}
\nc{\dn}{\downarrow}
\nc{\cA}{{\cal A}}
\nc{\cB}{{\cal B}}
\nc{\cC}{{\cal C}}
\nc{\cD}{{\cal D}}
\nc{\cE}{{\cal E}}
\nc{\cF}{{\cal F}}
\nc{\cG}{{\cal G}}
\nc{\cH}{{\cal H}}
\nc{\cI}{{\cal I}}
\nc{\cJ}{{\cal J}}
\nc{\cK}{{\cal K}}
\nc{\cL}{{\cal L}}
\nc{\cM}{{\cal M}}
\nc{\cN}{{\cal N}}
\nc{\cO}{{\cal O}}
\nc{\cP}{{\cal P}}
\nc{\cQ}{{\cal Q}}
\nc{\cR}{{\cal R}}
\nc{\cS}{{\cal S}}
\nc{\cT}{{\cal T}}
\nc{\cV}{{\cal V}}
\nc{\cX}{{\cal X}}
\nc{\cY}{{\cal Y}}
\nc{\cZ}{{\cal Z}}
\nc{\cW}{{\cal W}}
\nc{\csupp}{{\operatorname{csupp}}}
\nc{\qsupp}{{\operatorname{qsupp}}}
\nc{\var}{{\operatorname{var}}}
\nc{\rar}{\rightarrow}
\nc{\lrar}{\longrightarrow}
\nc{\polylog}{{\operatorname{polylog}}}
\nc{\wt}{{\operatorname{wt}}}
\nc{\av}[1]{{\left\langle {#1} \right\rangle}}
\nc{\supp}{{\operatorname{supp}}}
\nc{\argmin}{{\operatorname{argmin}}}
\def\x{\xi}
\nc{\RR}{{{\mathbb R}}}
\nc{\CC}{{{\mathbb C}}}
\nc{\FF}{{{\mathbb F}}}
\nc{\NN}{{{\mathbb N}}}
\nc{\ZZ}{{{\mathbb Z}}}
\nc{\PP}{{{\mathbb P}}}
\nc{\QQ}{{{\mathbb Q}}}
\nc{\UU}{{{\mathbb U}}}
\nc{\EE}{{{\mathbb E}}}
\nc{\id}{{\operatorname{id}}}
\nc{\CHSH}{{\operatorname{CHSH}}}
\nc{\be}{\begin{equation}}
\nc{\ee}{{\end{equation}}}
\nc{\bea}{\begin{eqnarray}}
\nc{\eea}{\end{eqnarray}}
\nc{\rU}{\mbox{U}}
\nc{\ob}[1]{#1}
\nc{\SEP}{{\text{\rm SEP}}}
\nc{\NS}{{\text{\rm NS}}}
\nc{\LOCC}{{\text{\rm LOCC}}}
\nc{\PPT}{{\text{\rm PPT}}}
\nc{\EXT}{{\text{\rm EXT}}}
\nc{\Sym}{{\operatorname{Sym}}}
\nc{\ERLO}{{E_{\text{r,LO}}}}
\nc{\ERLOCC}{{E_{\text{r,LOCC}}}}
\nc{\ERPPT}{{E_{\text{r,PPT}}}}
\nc{\ERLOCCinfty}{{E^{\infty}_{\text{r,LOCC}}}}
\nc{\Aram}{{\operatorname{\sf A}}}
\def\grd@save@target#1{%
  \def\grd@target{#1}}
\def\grd@save@start#1{%
  \def\grd@start{#1}}
\tikzset{
  grid with coordinates/.style={
    to path={%
      \pgfextra{%
        \edef\grd@@target{(\tikztotarget)}%
        \tikz@scan@one@point\grd@save@target\grd@@target\relax
        \edef\grd@@start{(\tikztostart)}%
        \tikz@scan@one@point\grd@save@start\grd@@start\relax
        \draw[minor help lines,magenta] (\tikztostart) grid (\tikztotarget);
        \draw[major help lines] (\tikztostart) grid (\tikztotarget);
        \grd@start
        \pgfmathsetmacro{\grd@xa}{\the\pgf@x/1cm}
        \pgfmathsetmacro{\grd@ya}{\the\pgf@y/1cm}
        \grd@target
        \pgfmathsetmacro{\grd@xb}{\the\pgf@x/1cm}
        \pgfmathsetmacro{\grd@yb}{\the\pgf@y/1cm}
        \pgfmathsetmacro{\grd@xc}{\grd@xa + \pgfkeysvalueof{/tikz/grid with coordinates/major step}}
        \pgfmathsetmacro{\grd@yc}{\grd@ya + \pgfkeysvalueof{/tikz/grid with coordinates/major step}}
        \foreach \x in {\grd@xa,\grd@xc,...,\grd@xb}
        \node[anchor=north] at (\x,\grd@ya) {\pgfmathprintnumber{\x}};
        \foreach \y in {\grd@ya,\grd@yc,...,\grd@yb}
        \node[anchor=east] at (\grd@xa,\y) {\pgfmathprintnumber{\y}};
      }
    }
  },
  minor help lines/.style={
    help lines,
    step=\pgfkeysvalueof{/tikz/grid with coordinates/minor step}
  },
  major help lines/.style={
    help lines,
    line width=\pgfkeysvalueof{/tikz/grid with coordinates/major line width},
    step=\pgfkeysvalueof{/tikz/grid with coordinates/major step}
  },
  grid with coordinates/.cd,
  minor step/.initial=.2,
  major step/.initial=1,
  major line width/.initial=2pt,
}
\def\problem@s{}
\newcounter{problems@cnt}
\newcommand{\allproblems}{\problem@s}
\nc{\st}{\text{subject to} \ }
\nc{\supre}{\text{supremum} \ }
\nc{\sdp}{\text{sdp}}
\begin{document}
\title{Practical distributed quantum information processing with LOCCNet}
\author{Xuanqiang Zhao}
\author{Benchi Zhao}
\author{Zihe Wang}
\author{Zhixin Song}
\author{Xin Wang}
\thanks{wangxin73@baidu.com}
\affiliation{Institute for Quantum Computing, Baidu Research, Beijing 100193, China}

\begin{abstract}
Distributed quantum information processing is essential for building quantum networks and enabling more extensive quantum computations. In this regime, several spatially separated parties share a multipartite quantum system, and the most natural set of operations is Local Operations and Classical Communication (LOCC). As a pivotal part in quantum information theory and practice, LOCC has led to many vital protocols such as quantum teleportation. However, designing practical LOCC protocols is challenging due to LOCC's intractable structure and limitations set by near-term quantum devices. Here we introduce LOCCNet, a machine learning framework facilitating protocol design and optimization for distributed quantum information processing tasks. As applications, we explore various quantum information tasks such as entanglement distillation, quantum state discrimination, and quantum channel simulation. We discover protocols with evident improvements, in particular, for entanglement distillation with quantum states of interest in quantum information. Our approach opens up new opportunities for exploring entanglement and its applications with machine learning, which will potentially sharpen our understanding of the power and limitations of LOCC. An implementation of LOCCNet is available in Paddle Quantum, a quantum machine learning Python package based on PaddlePaddle deep learning platform.
\vspace{1.2cm}
\end{abstract}

\date{\today}
\maketitle

\section{Introduction}
In the past few decades, quantum technologies have been found to have an increasing number of powerful applications in areas including optimization~\cite{Farhi2014,Harrigan2020}, chemistry~\cite{McArdle2018a,Arute2020}, security~\cite{Bennett1984,Ekert1991}, and machine learning~\cite{Biamonte2017a}. To realize large-scale quantum computers and deliver real-world applications, distributed quantum information processing will be essential in the technology road map, where quantum entanglement and its manipulation play a crucial role.

Quantum entanglement is central to quantum information by serving as a fundamental resource which underlies many important protocols such as teleportation~\cite{bennett1993teleporting}, superdense coding~\cite{bennett1992communication}, and quantum cryptography~\cite{Ekert1991}.
To achieve real-world applications of quantum technologies, protocols for manipulating quantum entanglement are essential ingredients, and it will be important to improve existing methods. The study of entanglement manipulation is one of the most active and important areas in quantum information~\cite{Plenio2007, horodecki2009quantum}.

In entanglement manipulation and distributed quantum information processing, multiple spatially separated parties are usually involved. As direct transfers of quantum data between these nodes are not feasible with current technology, Local Operations and Classical Communication (LOCC)~\cite{bennett1993teleporting} is more practical at this stage. Such an LOCC (or distant lab) paradigm plays a fundamental role in entanglement theory, and many important results have been obtained within this paradigm~\cite{horodecki2009quantum}. However, how to design LOCC protocols on near-term quantum devices~\cite{preskill2018quantum} remains an important challenge. Such protocols are generally hard to design even with perfect entanglement due to the complicated and hard-to-characterize structure of LOCC~\cite{Chitambar2014}. Moreover, limited capabilities and structure of near-term quantum devices have to be considered during the design of LOCC protocols. 

Inspired by the breakthroughs of deep learning~\cite{LeCun2015} in mastering the game of Go \cite{Silver2016} and solving  protein folding~\cite{Jumper2020}, it is desirable to apply machine learning ideas to explore quantum technologies. For instance, machine learning has been applied to improve quantum processor designs~\cite{Mavadia2017,Wan2017,Lu2017b,Niu2019} and quantum communication~\cite{Wallnofer2020,Bausch2018}.
Here, we adopt the ideas from machine learning to solve the challenges in exploring LOCC protocols. We use parameterized quantum circuits (PQC)~\cite{Benedetti2019a} to represent the local operations allowed in each spatially separated party and then incorporate multiple rounds of classical communication.
Then one can formulate the original task as an optimization problem and adopt classical optimization methods to search the optimal LOCC protocol. The PQCs have been regarded as machine learning models with remarkable expressive power, which leads to applications in quantum chemistry and optimization~\cite{Benedetti2019a}. Here, we generalize PQC to a larger deep learning network to deal with distributed quantum information processing tasks and in particular to explore better entanglement manipulation protocols.

In this work, we introduce a machine learning framework for designing and optimizing LOCC protocols that are adaptive to near-term quantum devices, which consists of a set of PQCs representing local operations.
As applications, we explore central quantum information tasks such as entanglement distillation, state discrimination, and quantum channel simulation.
We discover protocols with evident improvements via this framework, sharpening our understanding of the power and limitations of LOCC.
As showcases, we establish hardware-efficient and simple protocols for entanglement distillation and state discrimination, which outperforms previously best-known methods. In particular, for distillation of Bell states with non-orthogonal product noise, the optimized protocol outputs a state whose distillation fidelity even reaches the theoretical upper bound and hence is optimal.

\section{Results}
\subsection{The LOCCNet framework}
In this section, we introduce {LOCCNet, a machine learning framework} that facilitates the design of LOCC protocols for various quantum information processing tasks, including entanglement distillation~\cite{Bennett1996,deutsch1996quantum,Murao1998,Dur2007,Pan2003,Devetak2003a}, quantum state discrimination~\cite{Bennett1999b, Walgate2000, Fan2004a, Hayashi2006, Ghosh2004, Nathanson2005, Duan2007a, Chitambar2014b, Duan2009d, Childs2013, Li2017, Bandyopadhyay2014a}, and quantum channel simulation~\cite{Bennett1996c,Bennett2014,Berta2013,Pirandola2015b,Wilde2018,WW18,Fang2018}. An LOCC protocol can be characterized as a sequence of local quantum operations performed by spatially separated parties with classical communication of measurement outcomes [see Supplementary Note 1].

According to the number of classical communication rounds, one can divide LOCC into different classes~\cite{Chitambar2014}. The one-round protocols correspond to LOCC operations where one party applies a local operation and sends the measurement outcome to others, who then apply local operations chosen based on the outcome they receive. Based on one-round protocols, we are able to construct an $r$-round protocol recursively. All these protocols belong to the finite-round LOCC class, and can be visualized as tree graphs. Each node in the tree represents a local operation and different measurement outcomes correspond to edges connecting to this node's children, which represent different choice of local operations based on the measurement outcomes from last round.

Although the basic idea of LOCC is relatively easy to grasp, its mathematical structure is highly complicated~\cite{Chitambar2014} and hard to characterize. As indicated by its tree structure, a general $r$-round LOCC protocol could lead to exponentially many possible results, making LOCC protocol designs for many essential quantum information processing tasks very challenging.
At the same time, it will be more practical to consider LOCC protocols with hardware-efficient local operations and a few communication rounds due to the limited coherence time of local quantum memory. To overcome these challenges, we propose to find LOCC protocols with the aid of machine learning, inspired by its recent success in various areas. Specifically, we present the LOCCNet framework, which incorporates optimization methods from classical machine learning field into the workflow of designing LOCC protocols and can simulate any finite round LOCC in principle. 

\begin{figure}[t]
\centering
\includegraphics[width=0.5\textwidth]{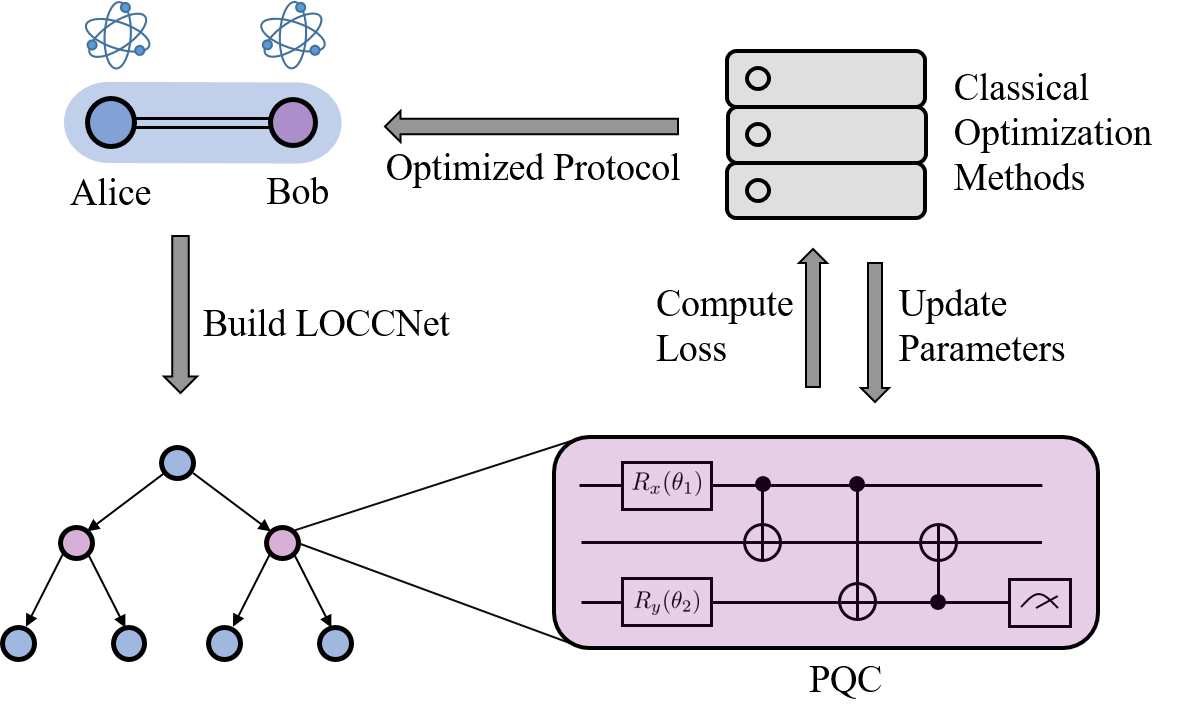}
\caption{\textbf{Illustration of the procedure for optimizing an LOCC protocol with LOCCNet.} For simplicity, only two parties are involved in this workflow, namely Alice and Bob. The tree presented here corresponds to a specific two-round LOCC protocol. Such a tree can be customized with LOCCNet. With each node (Local Operation) encoded as a PQC and arrows between nodes referring to classical communication, one can define a loss function to guide the training process depending on the task. The tree branch diverges indicating different possible measurement outcomes. Finally, one can adopt optimization methods to iteratively update the parameters $\bm{\theta}$ in each local operation and hence obtain the optimized LOCC protocol. 
}
\label{fig:LOCCNet_overview}
\end{figure}

As illustrated in Fig.~\ref{fig:LOCCNet_overview}, each party's local operations, represented by nodes in a tree, are described as parameterized quantum circuits (PQC)~\cite{Benedetti2019a}.
Users can measure any chosen qubit and define a customized loss function from measurement outcomes as well as remaining states. With a defined loss function for a task of interest, LOCCNet can be optimized to give a protocol. The effect of classical communication is also well simulated by LOCCNet in the sense that different PQCs can be built for different measurement outcomes from previous rounds. 

{
Previously, PQCs have been adapted to many research areas including quantum simulation~\cite{peruzzo2014variational}, quantum optimization~\cite{Farhi2014}, and quantum error correction~\cite{Johnson2017}. The family of variational quantum algorithms~\cite{Cerezo2020,Bharti2021,Endo2020}, based on PQCs, is one promising candidate to achieve quantum advantages with near-term devices. In quantum information, PQCs also help in estimating distance measures for quantum states~\cite{Chen2020a,Cerezo2019} and compressing quantum data~\cite{Romero2017,Cao2020}. Here, we take one step further by extending the use of PQCs to the distributed quantum information processing scenario where LOCC is the most natural set of operations.
}

In the next three sections, we will demonstrate the LOCCNet framework in details with important applications and present some interesting findings, including protocols that achieve better results than existing ones. We conduct software implementations of LOCCNet using the \href{https://qml.baidu.com/}{Paddle Quantum} toolkit~\cite{Paddlequantum} on the PaddlePaddle Deep Learning
Platform~\cite{Ma2019,Paddle}.

\subsection{Entanglement distillation}
Many applications of LOCC involve entanglement manipulation, and the use of entanglement is generally required to be in its pure and maximal form. Hence, the efficient conversion of entanglement into such a form, a process known as entanglement distillation~\cite{Bennett1996,Bennett1996c}, is usually a must for many quantum technologies. The development of entanglement distillation methods remains at the forefront of quantum information~\cite{horodecki2009quantum}. For example, the two-qubit maximally entangled state $|\Phi^+\rangle = 1/\sqrt{2}(|00\rangle+|11\rangle)$, which is also known as the entangled bit (ebit), is the fundamental resource unit in entanglement theory since it is a key ingredient in many quantum information processing tasks. Thus, an essential goal for entanglement distillation in a two-qubit setting is to convert a number of copies of some two-qubit state $\rho_{AB}$ shared by two parties, Alice and Bob, into a state as close as possible to the ebit. Here, closeness between the state $\rho_{AB}$ and the ebit is usually measured in terms of the fidelity
\begin{align}
F = \langle\Phi^+|\rho_{AB}|\Phi^+\rangle.
\end{align}

Although theory is more concerned with asymptotic distillation with unlimited copies of $\rho_{AB}$, protocols considering a finite number of copies are more practical due to the physical limitations of near-term quantum technologies. Also, practical distillation protocols usually allow for the possibility of failure as a trade-off for achieving a higher final fidelity. Furthermore, due to limited coherence time of local quantum memories, schemes involving only one round of classical communication are preferred in practice.
Under these settings, many practical schemes for entanglement distillation have been proposed~\cite{Bennett1996, deutsch1996quantum, fujii2009entanglement, kalb2017entanglement, rozpkedek2018optimizing, krastanov2019optimized}. Not surprisingly, there is not a single scheme that applies to all kinds of states. In fact, designing a protocol even for a specific type of states is a difficult task.

In this section, we apply LOCCNet to entanglement distillation and present selected results that reinforce the validity and practicality of using this framework for designing LOCC protocols. To use LOCCNet for finding distillation protocols for a state $\rho_{AB}$, we build two PQCs, one for Alice and one for Bob. In the preset event of success, these PQCs output a state supposed to have a higher fidelity to the ebit. To optimize PQCs, we define the infidelity of the output state and the ebit, i.e., $1 - F$, as the loss function to be minimized. As soon as the value of the loss function converges through training, the PQCs along with the optimized parameters form an LOCC distillation protocol. In principle, this training procedure is general and can be applied to find distillation protocols for any initial state $\rho_{AB}$ given its numerical form. Beyond rediscovering existing protocols, we are also able to find improved protocols with LOCCNet. Below, we give two distillation protocols for S states and isotropic states, respectively, as examples of optimized schemes found with LOCCNet.

An S state is a mixture of the ebit $|\Phi^+ \rangle$ and non-orthogonal product noise~\cite{rozpkedek2018optimizing}.
Here, we define it to be
\begin{align}
\rho_{AB} = p|\Phi^+\rangle\langle\Phi^+| + (1-p)|00\rangle\langle00|,
\end{align}
where $p\in[0,1]$. A distillation protocol known to perform well on two copies of some S state is the DEJMPS protocol~\cite{deutsch1996quantum}, which in this case outputs a state whose fidelity to the ebit is $(1+p)^2/(2+2p^2)$ with a probability of $(1+p^2)/2$ [see Supplementary Note 2].

Here, we present a protocol learned by LOCCNet that can output a state achieving a fidelity higher than DEJMPS and close to the highest possible fidelity. Details on this protocol after simplification are given in Fig.~\ref{fig:s_state_distill_circuit}, where Alice and Bob apply local operations to their own qubits independently and then compare their measurement outcomes through classical communication. The distillation succeeds only when both Alice and Bob get $0$ from computational basis measurements.

\begin{figure}[!hbtp]
\centerline{
\Qcircuit @C=1em @R=1.7em {
\lstick{A_0} & \targ &\qw &\qw &\qw \\
\lstick{A_1} & \ctrl{-1} &\qw & \gate{R_y(\theta)} &\meter \\
\lstick{B_0} & \targ & \ctrl{1} &\qw &\qw \\
\lstick{B_1} & \ctrl{-1} & \targ & \gate{R_y(\theta)} &\meter \\
}
}
\caption{\textbf{Circuit of a distillation protocol learned by LOCCNet for S states.} This simplified circuit represents local operations in a distillation protocol learned by LOCCNet for two copies of an S state, $\rho_{A_0B_0}$ and $\rho_{A_1B_1}$. The rotation angles of both $R_{y}$ gates are $\theta = \arccos(1-p)+\pi$, which depend on the parameter $p$ of the S states to be distilled.}
\label{fig:s_state_distill_circuit}
\end{figure}
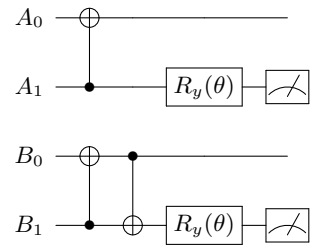

The final fidelity achieved by this protocol is compared with that achieved by the DEJMPS protocol in Fig.~\ref{fig:s_state_fid}. 
For the aim of benchmarking, the techniques based on partial positive transpose (PPT) were introduced to derive fundamental limits of entanglement distillation~\cite{Rains2001,Matthews2008,Wang2016,Fang2017,rozpkedek2018optimizing,Wang2016c}.  The entanglement theory under PPT operations has been extensively studied in the literature~(e.g.,~\cite{Audenaert2003,Wang2016d,Regula2019,Chitambar2017,Wang2017d,Wang2020c}) and offers valuable limitations of LOCC.
Here, the PPT bound obtained with semi-definite programming~\cite{rozpkedek2018optimizing} is an upper bound to the fidelity achieved by any LOCC protocol [see Supplementary Note 2]. 

As shown in the figure, the protocol learned by LOCCNet achieves near-optimal fidelity in the sense that it is close to the PPT bound. Analytically, for two copies of some S state with a parameter $p$, the post-measurement state in the event of success is $\sigma_{AB} = F|\Phi^+\rangle\langle\Phi^+| + (1-F)|\Phi^-\rangle\langle\Phi^-|$, where
\begin{align}
F = \frac{1+\sqrt{2p-p^2}}{2}
\end{align}
is its fidelity to the ebit and $|\Phi^-\rangle = 1/\sqrt{2}(|00\rangle-|11\rangle)$. The probability of arriving at this state is $p_\text{succ} = p^2-p^3/2$ [see Supplementary Note 2]. It is noteworthy that the distilled state is a Bell diagonal state of rank two. For two copies of such a state, the DEJMPS protocol achieves the optimal fidelity~\cite{rozpkedek2018optimizing,Ruan2018}. Thus, combining our protocol with the DEJMPS protocol offers an efficient and scalable distillation scheme for more copies of some S state.

\begin{figure}
\centering
\includegraphics[width=0.5\textwidth]{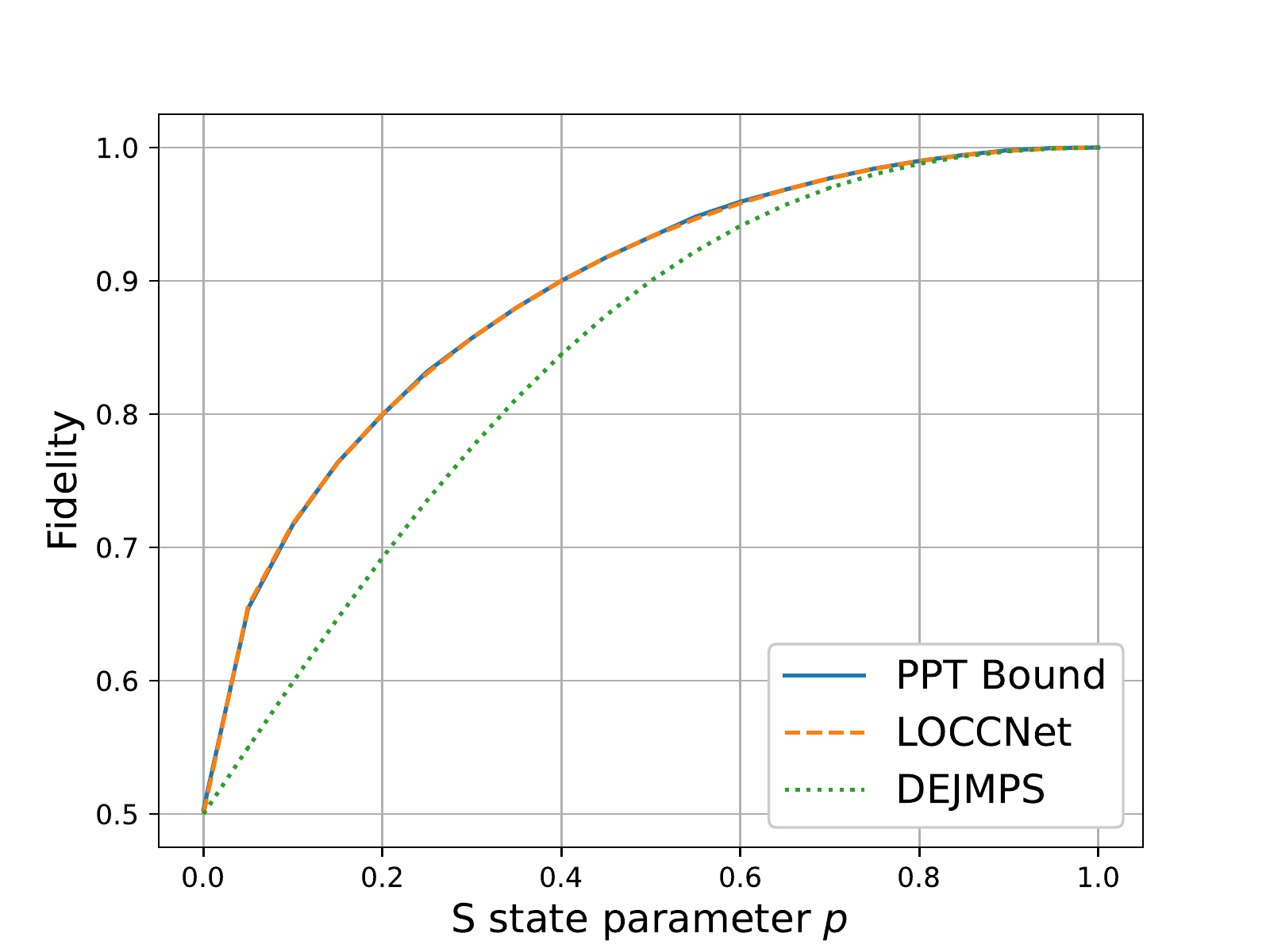}
\caption{\textbf{Fidelity achieved by distillation protocols for two copies of some S state.} The orange dashed line depicts the performance of the protocol learned by LOCCNet, which outperforms the DEJMPS protocol (green dotted). Also, the learned protocol is near optimal in the sense that its line almost aligns with the PPT bound (blue solid).}
\label{fig:s_state_fid}
\end{figure}

Another important family of entangled states is the isotropic state family, defined as
\begin{align}
\rho_{AB} = p|\Phi^+\rangle\langle\Phi^+| + (1-p)\frac{I}{4},
\end{align}
where $p\in[0,1]$ and $I$ is the identity matrix. Distillation protocols for two copies of some isotropic state have been well studied, and the DEJMPS protocol achieves empirically optimal fidelity in this case. Given four copies of some isotropic state with a parameter $p$, a common way to distill entanglement is to divide them into two groups of two copies and apply the DEJMPS protocol to each group. Conditioned on success, we then apply the DEJMPS protocol again to the two resulting states from the previous round. Since the DEJMPS protocol was originally designed for two-copy distillation, such a generalization is probably unable to fully exploit the resources contained in four copies of the state. Indeed, with the aid of LOCCNet, we find a protocol optimized specifically for four copies of some isotropic state. As illustrated in Fig.~\ref{fig:4copy_iso_circuit}, Alice and Bob first apply similar local operations with three pairs of qubits being measured and then compare their measurement outcomes through classical communication. If their measurement outcomes for each pair of qubits are identical, the distillation procedure succeeds.

\begin{figure}[hbtp]
\centerline{
\Qcircuit @C=1em @R=1.7em {
\lstick{A_0} & \ctrl{1} & \qw & \qw & \targ & \qw &\qw \\
\lstick{A_1} & \targ & \ctrl{1} & \qw & \qw & \gate{R_x(+\frac{\pi}{2})} & \meter \\
\lstick{A_2} & \qw & \targ & \ctrl{1} & \qw & \gate{R_x(+\frac{\pi}{2})} & \meter \\
\lstick{A_3} & \qw & \qw & \targ & \ctrl{-3} & \gate{R_x(+\frac{\pi}{2})} & \meter \\
}
}
\caption{\textbf{Circuit of a distillation protocol learned by LOCCNet for isotropic states.} This simplified circuit represents Alice's local operation in a protocol learned by LOCCNet for entanglement distillation with four copies of some isotropic state. Bob's local operation is identical to Alice's, except that the rotation angles of Bob's $R_x$ gates are $-\pi/2$.}
\label{fig:4copy_iso_circuit}
\end{figure}
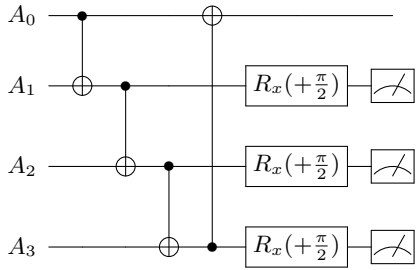

The fidelity achieved by this protocol for different input isotropic states is plotted in Fig.~\ref{fig:iso_state_fid}, along with that of the generalized DEJMPS protocol. For four copies of some isotropic state with a parameter $p$, our protocol achieves a final fidelity of
\begin{align}
F = \frac{1-2p+9p^2}{4-8p+12p^2},
\end{align}
which is slightly higher than the DEJMPS protocol, as shown in Fig.~\ref{fig:iso_state_fid}.
Details are referred to [Supplementary Note 2]. Another advantage of this optimized protocol is that the output state in the event of success is still an isotropic state, implying the possibility of a generalized distillation protocol for $4^n$ copies of some isotropic state.

We remark that our protocols are optimized with the goal to achieve the highest possible fidelity, so their probabilities of success are not high. For situations where the probability of success is important, one can also design a customized loss function to optimize a protocol according to their metrics.

\begin{figure}[!hbtp]
\centering
\includegraphics[width=0.5\textwidth]{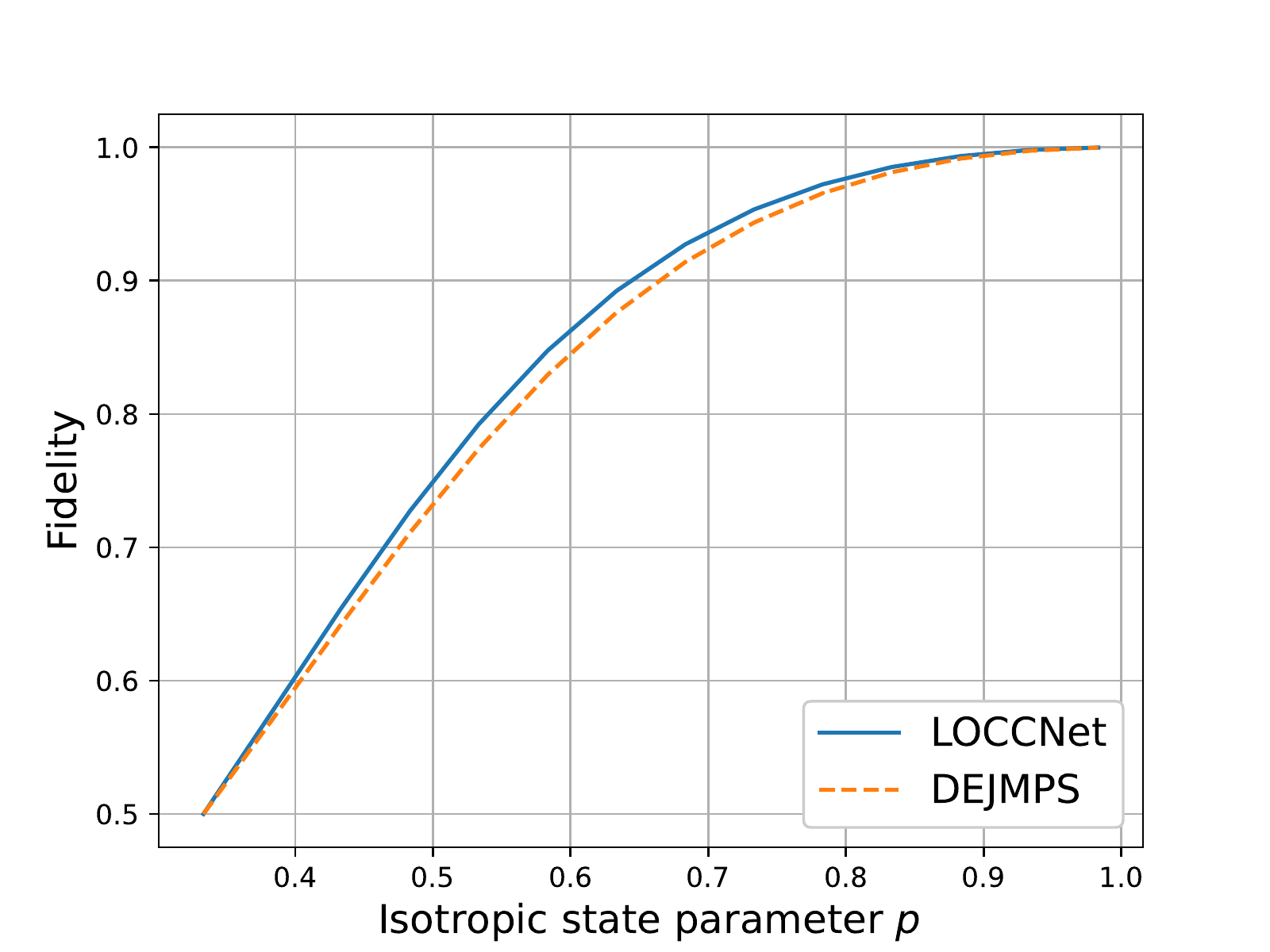}
\caption{\textbf{Fidelity achieved by distillation protocols for four copies of some isotropic state.} The blue solid line depicts the fidelity achieved by the protocol learned by LOCCNet, which outperforms the generalized DEJMPS protocol (orange dashed).}
\label{fig:iso_state_fid}
\end{figure}

\subsection{Distributed quantum state discrimination}
Another important application of LOCC is quantum state discrimination (QSD). Distinguishing one physical configuration from another is central to information theory. When messages are encoded into quantum states for information transmission, the processing of this information relies on the distinguishability of quantum states. Hence, QSD has been a central topic in quantum information~\cite{Bae2015,reviewQSD,Li2015a}, which investigates how well quantum states can be distinguished and underlies various applications in quantum information processing tasks, including quantum data hiding~\cite{DiVincenzo2002} and dimension witness~\cite{Gallego2010}.

QSD using global quantum operations is well-understood in the sense that the optimal strategy maximizing the success probability can be solved efficiently via semi-definite programming (SDP)~\cite{Eldar2003,Sun2002,Jezek2002}. However, for an important operational setting called distant lab paradigm or distributed regime, our knowledge of QSD remains limited despite substantial efforts in the past two decades~\cite{Bennett1999b, Walgate2000, Fan2004a, Hayashi2006, Ghosh2004, Nathanson2005, Duan2007a, Chitambar2014b, Duan2009d, Childs2013, Li2017, Bandyopadhyay2014a}. In the distributed regime, multipartite quantum states are distributed to spatially separated labs, and the goal is to distinguish between these states via LOCC.

For two orthogonal pure states shared between multiple parties, it has been shown that they can be distinguished via LOCC alone no matter if these states are entangled or not~\cite{Walgate2000}. However, it is not easy to design a concrete LOCC protocol for practical implementation on near-term quantum devices. Using LOCCNet, one can optimize and obtain practical LOCC protocols for quantum state discrimination. Furthermore, for non-orthogonal states, limited aspects have been investigated in terms of the feasibility of LOCC discrimination. However, LOCCNet can provide an optimized and practical protocol in this realistic setting.

Here, to explore the power of LOCCNet in state discrimination, we focus on the optimal success probability of discriminating between noiseless and noisy Bell states via LOCC. Consider two Bell states, $|\Phi^+\rangle$ and $|\Phi^-\rangle$, and an amplitude damping (AD) channel $\mathcal{A}$ with noise parameter $\gamma$ such that $\mathcal{A}(\rho) = E_0\rho E_0^\dagger+E_1\rho E_1^\dagger$ with $E_0=|0\rangle\langle0|+\sqrt{1-\gamma}|1\rangle\langle1|$ and $E_1=\sqrt{\gamma}|0\rangle\langle1|$. If we send $|\Phi^-\rangle$'s two qubits respectively through this AD channel, then the resulting state is $\mathcal{A}\otimes\mathcal{A}(|\Phi^-\rangle\langle\Phi^-|)$.
The goal is now to distinguish between $\proj{\Phi^+}$ and $\mathcal{A}\otimes\mathcal{A}(\proj{\Phi^-})$. 

Suppose $\Phi_0$ and $\Phi_1$ are some pair of two-qubit states. To find a protocol discriminating between them, we build an ansatz with measurements on both qubits. As illustrated in Fig.~\ref{fig:QSD_circuit}, Alice performs a unitary gate on her qubit followed by a measurement, whose outcome determines Bob's operation on his qubit.  Given an ideal discrimination protocol, Bob's measurement outcome should be $0$ if and only if the input state is $\Phi_0$ so that he can tell which state the input state is for sure. Based on this observation, we define a loss function
\begin{align}
L = P(1|\Phi_0) + P(0|\Phi_1),
\end{align}
where $P(j|\Phi_k)$ is the probability of Bob's measurement outcome being $j$ given the input state being $\Phi_k$. By minimizing this loss function, we are able to obtain a protocol for distinguishing between states $\Phi_0$ and $\Phi_1$ with an optimized probability of success. Specifically, for $\Phi_0\equiv\proj{\Phi^+}$ and $\Phi_1\equiv\mathcal{A}\otimes\mathcal{A}(\proj{\Phi^-})$, through optimization we find a protocol where Alice's local unitary operation is $U=R_y(\pi/2)$ and Bob's local unitary operation is $V=R_y((-1)^a\theta)$ where $\theta = \pi-\arctan((2-\gamma)/\gamma)$ and $a=0\text{ or }1$ is Alice's measurement outcome. This optimized protocol achieves an average success probability of
\begin{align}
p_\text{succ} = \frac{1}{2} + \frac{\sqrt{2-2\gamma+\gamma^2}}{2\sqrt{2}},
\end{align}
as given in [Supplementary Note 3].

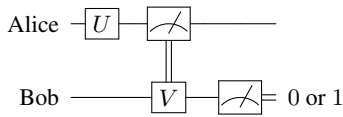
\begin{figure}[!hbtp]
\centerline{
\Qcircuit @C=0.6em @R=1.7em {
\lstick{\text{Alice}} & \gate{U} & \qw & \meter \cwx[1] & \qw & \qw & \qw \\
\lstick{\text{Bob}} & \qw & \qw  & \gate{V} & \qw & \meter & \rstick{0\text{ or }1}\cw \\
}
}
\caption{\textbf{Ansatz used for finding QSD protocols with LOCCNet.} Alice performs a unitary gate on her qubit and measures. Then Bob performs on his qubit a unitary gate chosen based on Alice's measurement result. Bob's measurement outcome is supposed to tell which state the input state is.}
\label{fig:QSD_circuit}
\end{figure}

In Fig.~\ref{fig:QSD_exp}, we compare the protocol learned by LOCCNet with the optimal protocol for perfect discrimination between two noiseless and orthogonal Bell states $|\Phi^+\rangle$ and $|\Phi^-\rangle$. The PPT bound shown in Fig.~\ref{fig:QSD_exp} is obtained via SDP and serves as an upper bound to the average probability of any LOCC protocol recognizing the input state correctly~\cite{Yu2014a}, where the input state is either $\Phi_0$ or $\Phi_1$ with equal chance. While the noiseless protocol is consistently better than random guessing as noise in the AD channel increases, it inevitably suffers from a decrease in its discrimination ability. The gap between its probability of success and the PPT bound steadily widens. On the other hand, the protocol optimized with LOCCNet can achieve a near-optimal probability of success for each noise setting, as shown in the figure.

\begin{figure}[h]
    \centering
    \includegraphics[width=0.5\textwidth]{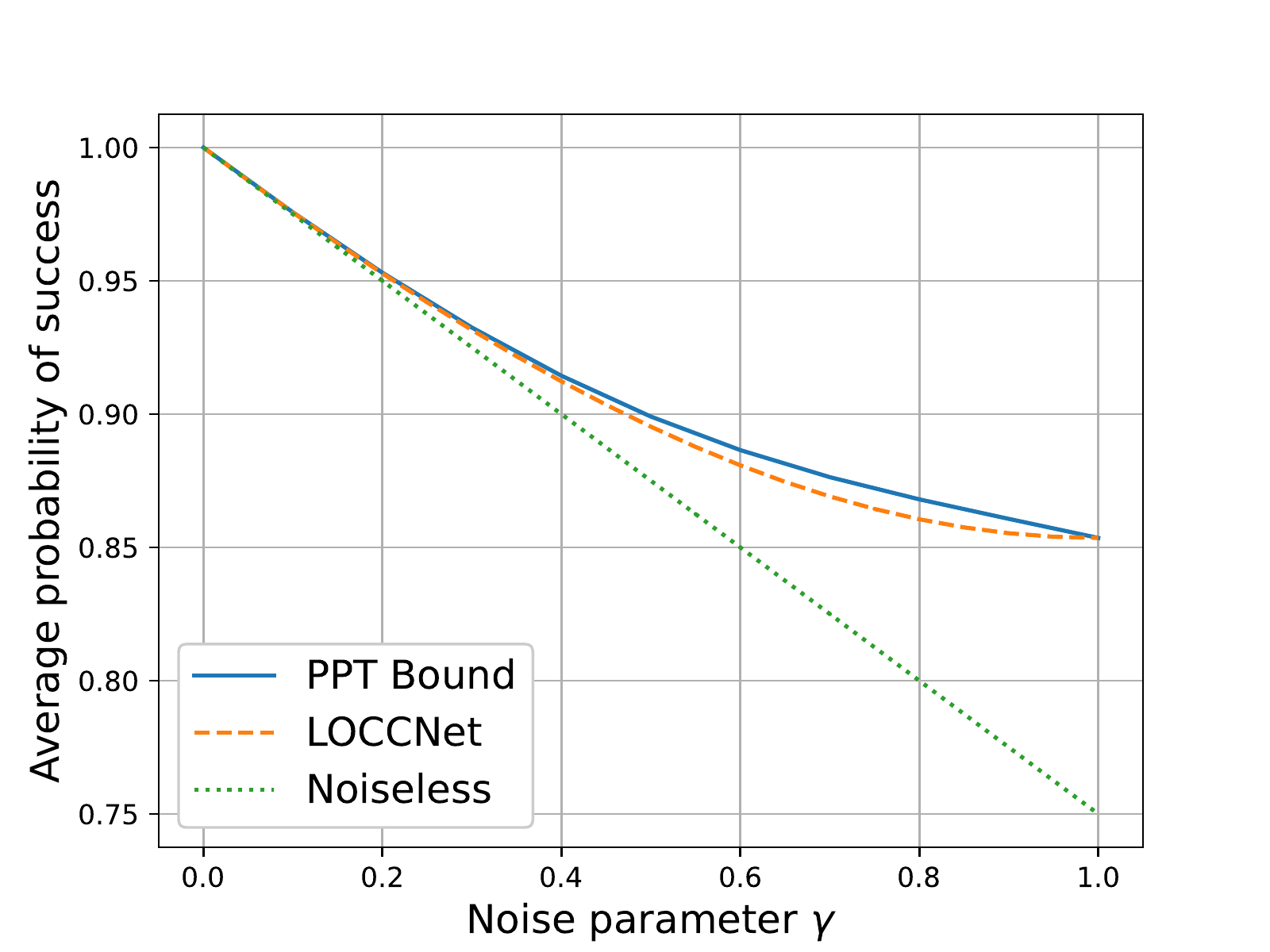}
    \caption{\textbf{Average success probability of distinguishing a Bell state and a noisy Bell state.} The orange dashed line depicts the behavior of the protocol via LOCCNet, which outperforms the protocol for distinguishing perfect orthogonal Bell states (green dotted). Moreover, the protocol from LOCCNet is near optimal since it almost matches the upper bounds obtained via PPT POVMs~(blue solid).}
    \label{fig:QSD_exp}
\end{figure}

\subsection{Quantum channel simulation}
One central goal of quantum information is to understand the limitations governing the use of quantum systems to take advantage of quantum physics laws. Quantum channel lies at the heart of this question since it characterizes what we can do with the quantum states physically~\cite{Nielsen2010,Wilde2017book,Watrous2011b}. To fully exploit quantum resources, the ability to manipulate quantum channels under operational settings is important. Particularly, in distributed quantum computing, one fundamental primitive, dubbed quantum channel simulation, is to realize quantum channels from one party to another using entanglement and LOCC protocols. Quantum channel simulation, exploiting entanglement to synthesize a target channel through LOCC protocols~\cite{Bennett1996c,Bennett2014,Berta2013,Pirandola2015b,Wilde2018,WW18,Fang2018,Pirandola2018}, servers as the basis of many problems in quantum information, including quantum communication, quantum metrology~\cite{Pirandola2017}, and quantum key distribution~\cite{Pirandola2020}.

One famous example of quantum channel simulation is quantum teleportation (i.e., simulation of the identity channel). As one of the most important quantum information processing protocols~\cite{bennett1993teleporting,Pirandola2015}, quantum teleportation exploits the physical resource of entanglement to realize noiseless quantum channels between different parties and it is an important building block for quantum technologies including distributed quantum computing and quantum networks. Similar to quantum teleportation, quantum channel simulation is a general technique to send an unknown quantum state $\psi$ from a sender to a receiver such that the receiver could obtain $\cN_{A'\to B}(\psi_{A'})$ with the help of a pre-shared entangled state $\rho_{AB}$ and an LOCC protocol $\Pi$. The overall scheme simulates the target channel $\cN$ in the sense that
\begin{align}
    \Pi(\psi_{A'} \otimes \rho_{AB}) = \cN_{A'\to B}(\psi_{A'}), \forall \psi_{A'}.
\end{align}

For some classes of channels such as Pauli channels, the LOCC-based simulation protocols were known \cite{Bennett1996c,Horodecki1999,Pirandola2015b}. However, the LOCC protocols for general quantum channel simulation is hard to design due to the complexity of LOCC. Even for the qubit amplitude damping (AD) channel, the LOCC protocol for simulating this channel in the non-asymptotic regime is still unknown, and its solution would provide a better estimate of its secret key capacity~\cite{Pirandola2020}. Note that the asymptotic simulation of this channel involving infinite dimensions was introduced in \cite{Pirandola2015b}. 

Here, we apply our LOCCNet to explore the simulation of an AD channel $\mathcal{A}$ using its Choi state~\cite{Choi1975} $\rho_{\mathcal{A}} = (I\otimes\mathcal{A}) (\Phi^+)$ as the pre-shared entangled state. {Note AD channel is one of the realistic sources of noise in superconducting quantum processor~\cite{Chirolli2018}.}

{To train the LOCCNet for simulating $\cA$}, we select a set of linearly independent density matrices $S$ as the training set. The loss function for this channel simulation task is then defined as
\begin{align}
L = -\sum_{\psi\in S} F(\cA(\psi), \cB(\psi)),
\end{align}
where $\cB$ is the actual channel simulated by LOCCNet with current parameters and $F(\rho, \sigma) = \text{Tr}\left(\sqrt{\rho^{1/2}\sigma\rho^{1/2}}\right)^2$ gives the fidelity between states $\rho$ and $\sigma$.
With this loss function to be minimized, the parameters in LOCCNet are optimized to maximize the state fidelity between {$\mathcal{A}(\psi)$ and $\mathcal{B}(\psi)$} for all $\psi \in S$.

Once the LOCCNet is {trained to teleport all the basis states in $S$ with near perfect fidelity}, we obtain a protocol for simulating $\cA$. For benchmarking, we randomly generate 1000 pure states and teleport them to Bob. The results are summarized in Fig.~\ref{fig:AD_channel_sim}. Compared with the original teleportation protocol, we could achieve an equivalent performance at low noise level and a better performance at noise level $\gamma > 0.4$. Note that the numerical simulations are conducted on {Paddle Quantum}~\cite{Paddlequantum}. 

\begin{figure}[t]
\centering
\includegraphics[width=0.5\textwidth]{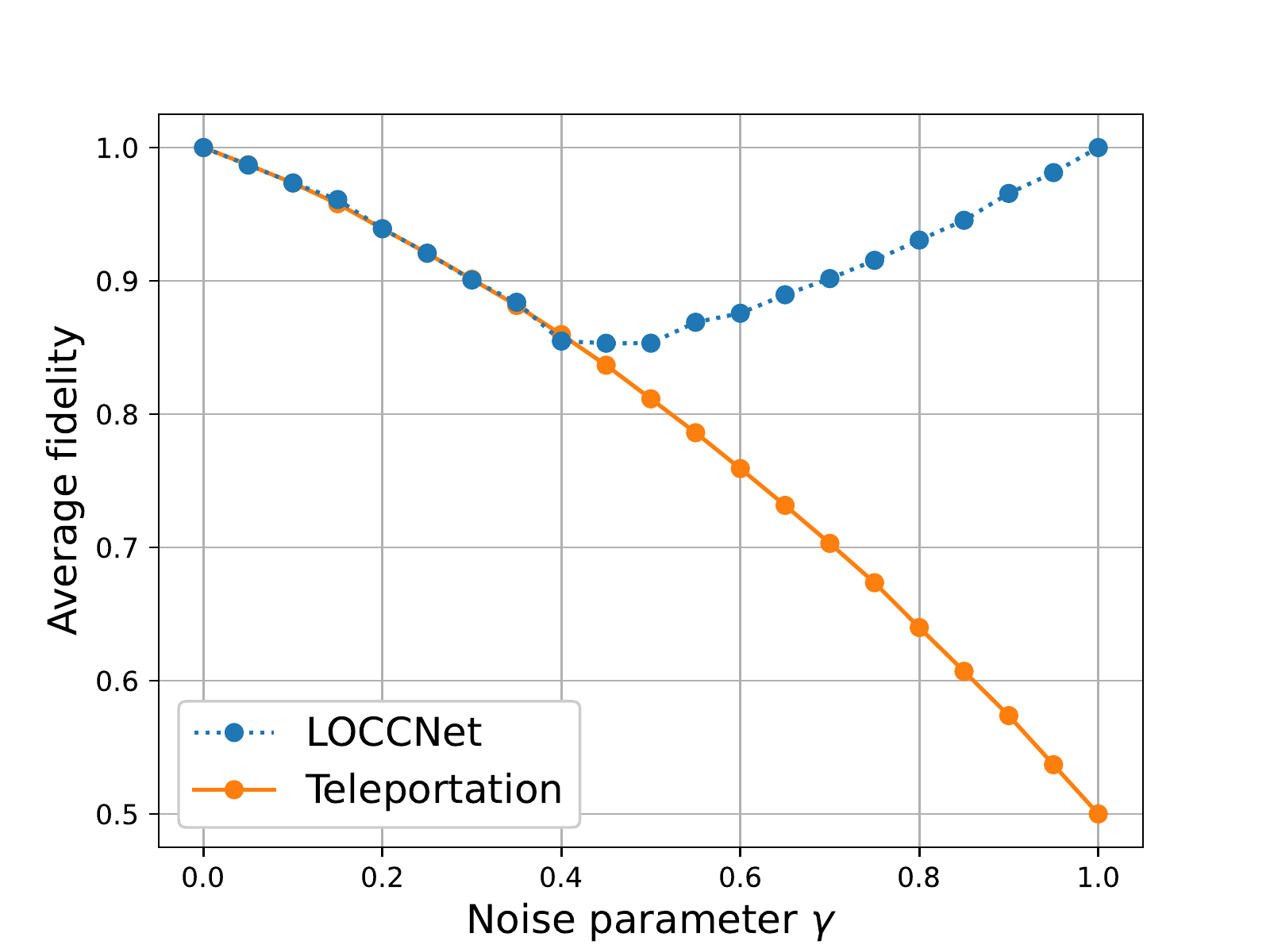}
\caption{\textbf{Average fidelity of simulating AD channel with LOCC protocols.} The blue curve depicts the behavior of the protocol via LOCCNet, which outperforms the original teleportation (orange) at high noise level (noise parameter $\gamma > 0.4$). Each data point contains the statistical results of $1000$ randomly generated states.}
\label{fig:AD_channel_sim}
\end{figure}

\section{Discussion}
We established LOCCNet for exploring LOCC protocols in distributed quantum information processing. {Its overall pipeline is standard for machine learning algorithms. For a specific task, one firstly designs an appropriate loss function and then utilizes different LOCCNet structures and optimization methods to train the model to obtain an optimal or near-optimal protocol. Depending on the nature of the task, a selected training data set may be required, as in the case of channel simulation. Based on the current design of LOCCNet, more machine learning techniques, such as reinforcement learning could be incorporated into this framework, making it a more powerful tool for exploring LOCC protocols.}

LOCCNet not only unifies and extends the existing LOCC protocols, but also sheds light on the power and limitation of LOCC in the noisy intermediate-scale quantum (NISQ) era~\cite{preskill2018quantum} by providing a plethora of examples. We developed improved protocols for entanglement distillation, local state discrimination, and quantum channel simulation as applications. As a showcase, we applied LOCCNet to establish hardware-efficient and state-of-the-art protocols for entanglement distillation of noisy entangled states of interest. In addition to making a significant contribution to entanglement distillation, LOCCNet finds direct practical use in many settings, as we exemplified with several explicit applications in distinguishing noisy and noiseless Bell states as well as simulating amplitude damping channels.

As we have shown the ability of LOCCNet in discovering improved LOCC protocols, one future direction is to apply LOCCNet to further enhance practical entanglement manipulation and quantum communication and explore fundamental problems in quantum information theory. {While in this paper we mainly focus on bipartite cases, LOCCNet also supports multipartite entanglement manipulation. For example, as an essential part in quantum repeaters~\cite{Briegel1998}, entanglement swapping aims to transform two entangled pairs shared between Alice and Bob and between Bob and Carol into a new entangled pair shared by Alice and Carol using only LOCC. Indeed, we could use LOCCNet to design such a protocol. For instance, we can build an LOCCNet where Bob first operates on and measure his subsystem, and then Alice and Carol perform local operations according to the measurement results from Bob. The loss function to minimize can be defined as the infidelity of a target state and the output state shared between Alice and Carol. Similar procedures can be followed to apply LOCCNet in optimizing other multipartite protocols as well, which is worth exploring in future works.}

Another important direction is to extend the framework to the continuous-variable quantum information processing, which may be applied to {explore better} LOCC protocols of private communication based on continuous variable systems~\cite{Pirandola2020}. As we have seen the potential of advancing distributed quantum information processing with the aid of machine learning, we expect more of such cases with classical machine learning being used to improve quantum technologies, which in turn will enhance quantum machine learning applications.

\section*{Data Availability}
Data that support the plots and other findings of this study are available from the corresponding authors upon reasonable request.

\section*{Code Availability}
Code used in the numerical experiments on quantum channel simulation is available at \url{https://github.com/vsastray/LOCCNetcodes}. Other Code used in this study is available from the corresponding authors upon reasonable request.

\section*{Acknowledgements}
We would like to thank Runyao Duan and Kun Fang for helpful discussions.

\section*{Competing Interests}
The authors declare no competing interests.

\section*{Author Contributions}
X. W. formulated the initial idea and the framework; X. Z. and B. Z. developed the theory; X. Z., B. Z., Z. W., and Z. S. performed the experiments.
All co-authors contributed to the preparation of the manuscript.


\newpage
\vspace{2cm}
\onecolumngrid
\vspace{2cm}
\begin{center}
{\textbf{\large Supplemental Information: Practical distributed quantum information processing with LOCCNet}}
\end{center}

\renewcommand{\theequation}{S\arabic{equation}}
\renewcommand{\theproposition}{S\arabic{proposition}}
\renewcommand{\thefigure}{S\arabic{figure}}
\renewcommand{\theHequation}{Supplement.\theequation}
\renewcommand{\theHfigure}{Supplement.\thefigure}
\setcounter{equation}{0}
\setcounter{table}{0}
\setcounter{section}{0}
\setcounter{proposition}{0}
\setcounter{figure}{0}

\section*{SUPPLEMENTARY NOTE 1: Details of LOCC}\label{app:xxx}

\textbf{Preliminaries.}
We begin with the preliminaries on quantum information. 
We will frequently use symbols such as $A$ (or $A'$) and $B$ (or $B'$) to denote finite-dimensional Hilbert spaces associated with Alice and Bob, respectively. 
We use $d_A$ to denote the dimension of system $A$. The set of linear operators acting on $A$ is denoted by $\cL(A)$. We usually write an operator with a subscript indicating the system that the operator acts on, such as $M_{AB}$, and write $M_A:=\tr_B M_{AB}$.

A quantum state on system $A$ is a positive operator $\rho_A$ with unit trace. The set of quantum states is denoted as $S(A):= \{\,\rho_A \geq 0 \,|\, \tr \rho_A = 1\,\}$. 
We call a positive operator separable if it can be written as a convex combination of tensor product positive operators. A bipartite positive semidefinite operator $E_{AB} \in \cL(A\ox B)$ is
said to be Positive-Partial-Transpose (PPT) if $E_{AB}^{T_{B}}$ is positive semidefinite. Note that the action of partial transpose (with respect to $B$) is defined as $(\ket{i_A}\bra{k_A}\otimes\ket{j_B}\bra{l_B})^{T_{B}}=\ket{i_A}\bra{k_A}\otimes\ket{l_B}\bra{j_B}$.

\textbf{LOCC.}
When a quantum system is distributed to spatially separated parties, it is natural to consider how the system evolves when the parties perform local quantum operations with classical communication. A systematic definition of LOCC can be found in~\cite{Chitambar2014}. Here, for self-consistency, we give a detailed description of LOCC as follows.

Consider a setting involving multiple spatially separated parties sharing a multipartite quantum system. The set $\text{LOCC}_1$ consists of the most elementary LOCC operations corresponding to LOCC protocols with one classical communication round, where one party performs a local operation and sends the measurement outcome to others, who then perform corresponding local operations on their local systems upon receiving the outcome. A local operation can be described as a set of completely positive (CP) maps $\{\mathcal{E}_m\}$ such that $\sum_m\mathcal{E}_m$ is trace-preserving. The subscript $m$ corresponds to an operation's measurement outcome, which could affect each party's choices of subsequent local operations. A more complicated LOCC operation can be seen as a sequence of $\text{LOCC}_1$ operations. Specifically, for any $r\geq2$, $\text{LOCC}_r$ is defined to be a set of LOCC operations, in which each operation is constructed from an $\text{LOCC}_{r-1}$ operation followed by an $\text{LOCC}_1$ operation. A common characteristic of these LOCC operations is that they can implemented with finite rounds of classical communication. Thus, we define a set $\text{LOCC}_\mathbb{N}$, corresponding to finite round protocols, such that an LOCC operation is in this set if it belongs to $\text{LOCC}_r$ for some $r$ in $\mathbb{N}=\{1,2,\dots\}$.
As there are finite round protocols, there also exist infinite round protocols in theory. These infinite round protocols, together with operations in $\text{LOCC}_\mathbb{N}$, form the set known as $\text{LOCC}$.

LOCCNet is a machine learning framework developed for designing and exploring LOCC protocols for various quantum information processing tasks. In the main text , we give a brief introduction to this framework. Here, we give some common types of LOCC protocols involving two parties, Alice and Bob, as examples to explain how a protocol can be constructed and optimized using the LOCCNet.

\textit{Optimizing one-round LOCC protocols.}
One-round LOCC protocols are protocols having only one round of classical communication. An example is shown in Fig.~\ref{sfig:one_round_circuit}. An application of such a protocol is quantum state teleportation. To optimize a one-round protocol with LOCCNet, we need to build and train three PQCs, shown as a tree in Fig.~\ref{sfig:one_round_tree}. The PQC $U(\theta_0)$ is used to optimize Alice's local operation $U$, and PQCs $V_0(\theta_1)$ and $V_0(\theta_2)$ are for Bob's local operation in the case of Alice measuring $0$ and $1$, respectively.

\begin{figure}[ht]
\centerline{
\Qcircuit @C=1em @R=1.7em {
\lstick{A} & {/} \qw & \gate{U} & \meter \cwx[1] & \qw \\
\lstick{B} & {/} \qw & \qw & \gate{V} & \qw
}
}
\caption{A circuit illustration of one-round LOCC. Alice first performs a local operation and sends the measurement outcome to Bob. Bob then perform a local operation accordingly.}
\label{sfig:one_round_circuit}
\end{figure}
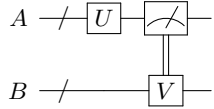

\begin{figure}[h]
\includegraphics[width=0.4\textwidth]{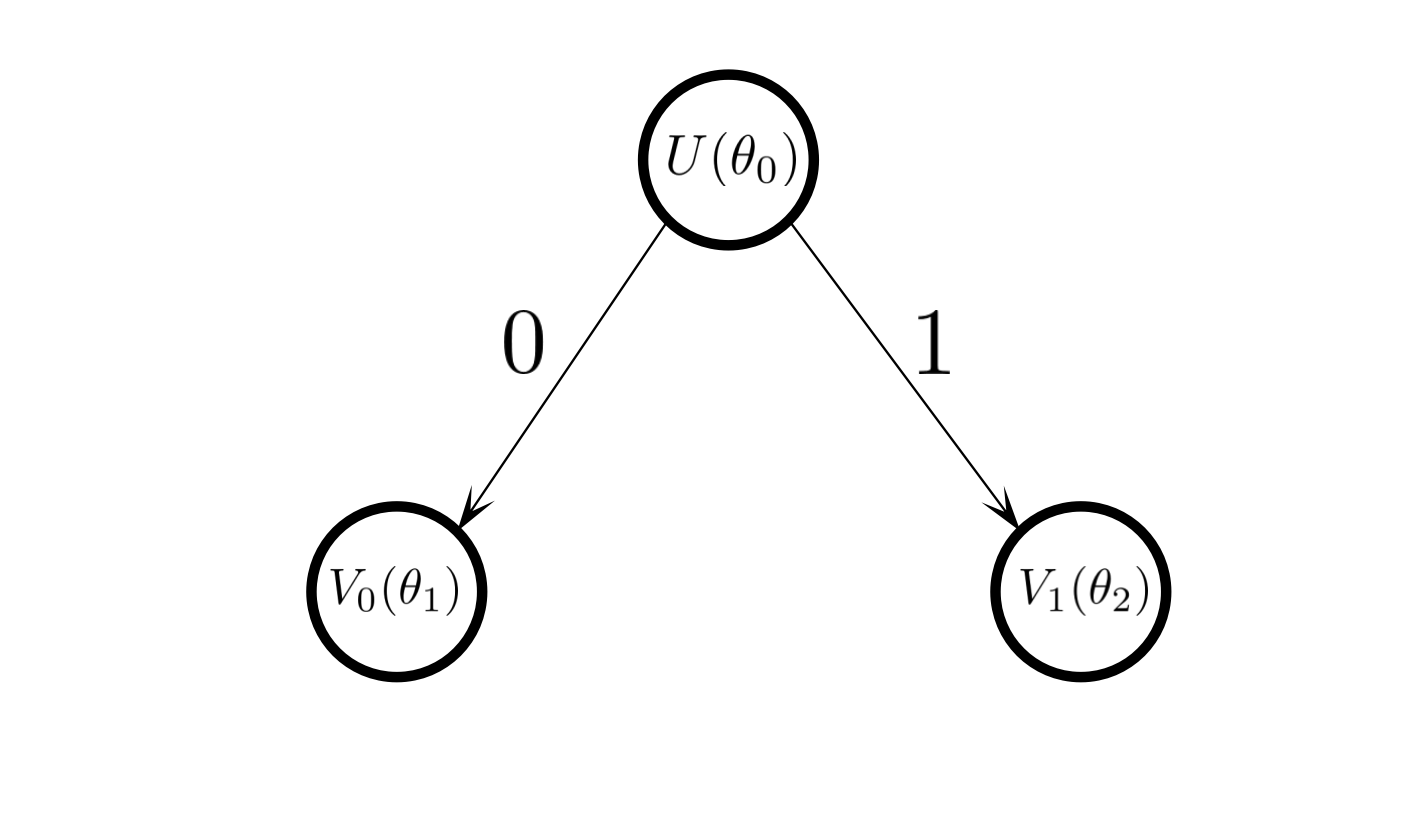}
\caption{Tree structure of the LOCCNet used for optimizing a one-round protocol.}
\label{sfig:one_round_tree}
\end{figure}

\textit{Optimizing two-round LOCC protocols.}
A general two-round LOCC protocol includes Alice performing a local operation and telling Bob her measurement outcome, then Bob performing a corresponding local operation and telling Alice his measurement outcome, and finally Alice performing another local operation. Such a protocol is already a little complicated and optimizing such a protocol requires seven PQCs. Here, we give two special types of two-round protocols that are easier to train and has practical applications.

The first type of protocols is shown in Fig.~\ref{sfig:two_round_protocol_agree} and are widely used for entanglement distillation. In such a protocol, Alice and Bob first perform local operations independently and then exchange their measurement outcomes through classical communication to check whether the expected task is completed. To optimize such a protocol, we only need to build two PQCs, one for Alice's local operation and one for Bob's local operation.
\begin{figure}[!hbtp]
\centerline{
\Qcircuit @C=1em @R=1.7em {
\lstick{A} & {/} \qw & \gate{U} & \meter & \rstick{0\text{ or }1}\cw \\
\lstick{B} & {/} \qw & \gate{V} & \meter & \rstick{0\text{ or }1}\cw
}
}
\caption{A circuit illustration of a type of two-round LOCC where Alice and BOB perform local operations independently before exchanging measurement outcomes.}
\label{sfig:two_round_protocol_agree}
\end{figure}
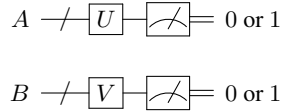

Another type of protocols is given in Fig.~\ref{sfig:two_round_protocol_ctrl}. In such a protocol, after Bob obtains his measurement outcome and tells it to Alice, Alice does not need to perform a local operation. An application of such a protocol is state discrimination, as we show in the main text. Like training a one-round protocol, optimizing a protocol of this type only requires three PQCs.

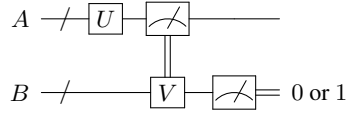
\begin{figure}[!hbtp]
\centerline{
\Qcircuit @C=1em @R=1.7em {
\lstick{A} & {/} \qw & \gate{U} & \meter \cwx[1] & \qw & \qw \\
\lstick{B} & {/} \qw & \qw & \gate{V} & \meter & \rstick{0\text{ or }1}\cw
}
}
\caption{A circuit illustration of another type of two-round LOCC. In such a protocol, Bob sending his measurement outcome to Alice is the last step.}
\label{sfig:two_round_protocol_ctrl}
\end{figure}

\section*{SUPPLEMENTARY NOTE 2: Analysis of entanglement distillation}
The aim of entanglement distillation is to compensate for the impurity caused by noise and restore a maximally entangled state at the cost of many noisy entangled states. In this sense, one could also refer an entanglement distillation protocol as a purification or error-correction protocol.
The Bell states are four two-qubit maximally entangled states defined as
\begin{align}
|\Phi^\pm\rangle = \frac{1}{\sqrt{2}}(|00\rangle\pm|11\rangle),\quad
|\Psi^\pm\rangle = \frac{1}{\sqrt{2}}(|01\rangle\pm|10\rangle).
\end{align}
The state $\ket{\Phi}$ is also known as the entangled bit (ebit), and entanglement distillation in two-qubit settings usually means to convert copies of a state $\rho_{AB}$ shared by two parties, Alice and Bob, into a state closer to the ebit. Here, closeness between the state $\rho_{AB}$ and the ebit is usually measured in terms of the fidelity
\begin{align}
F = \langle\Phi^+|\rho_{AB}|\Phi^+\rangle.
\end{align}

A well known protocol for two-copy entanglement distillation is the DEJMPS protocol, which is illustrated in Fig.~\ref{sfig:DEJMPS_protocol}. Sharing two copies of an initial state, $\rho_{A_0B_0}$ and $\rho_{A_1B_1}$, both Alice and Bob first apply $R_x$ gates and CNOT gates to their local qubits and then measure a pair of qubits from the same copy. Finally, they exchange measurement outcomes and output the unmeasured copy when their outcomes agree. Otherwise, the distillation procedure fails.

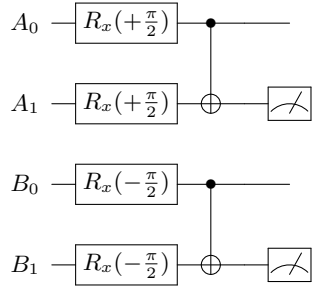
\begin{figure}[!hbtp]
\centerline{
\Qcircuit @C=1em @R=1.7em {
\lstick{A_0}  &\gate{R_x(+\frac{\pi}{2})}& \ctrl{1} &\qw &\qw \\
\lstick{A_1}  &\gate{R_x(+\frac{\pi}{2})}& \targ    &\qw &\meter \\
\lstick{B_0}  &\gate{R_x(-\frac{\pi}{2})}& \ctrl{1} &\qw &\qw \\
\lstick{B_1}  &\gate{R_x(-\frac{\pi}{2})}& \targ    &\qw &\meter \\
}
}
\caption{The DEJMPS protocol for two-copy entanglement distillation.}
\label{sfig:DEJMPS_protocol}
\end{figure}

The DEJMPS protocol has been shown to be optimal in purifying two copies of any Bell diagonal state with of rank at most three~\cite{rozpkedek2018optimizing}, where a Bell diagonal state is a state of the form
\begin{align}
\rho_{AB} = p_0\proj{\Phi^+} + p_1\proj{\Psi^+} + p_2\proj{\Phi^-} + p_3\proj{\Psi^-},
\end{align}
which is a convex combination of the four Bell states. For conciseness, we can write such a Bell diagonal state as a $4$-tuple,
\begin{align}
\rho_{AB} = (p_0, p_1, p_2, p_3).
\end{align}
The DEJMPS protocol can also distill some states besides Bell diagonal states, like S states. In the following, we will analyze the performance of the DEJMPS protocol on two copies of an S state and compare it with a protocol learned by LOCCNet. After that, we will compare the DEJMPS protocol with another protocol learned by LOCCNet for distilling four copies of an isotropic state, which is a special Bell diagonal state.

\textbf{S state.}\label{ssec:s_state}
The S state is defined as the Bell state with a non-orthogonal product noise,
\begin{align}\label{seq:s_state}
\rho_{AB}(p) = p|\Phi^+\rangle\langle\Phi^+| + (1-p)|00\rangle\langle00|,
\end{align}
where $p\in[0,1]$. In the main text, we give expressions of fidelity achieved by the DEJMPS protocol and the protocol learned by LOCCNet for two copies of an S state. Here, we give a detailed derivation of these two expressions.
\begin{proposition}
For two copies of an S state with parameter $p$, the DEJMPS protocol outputs a state whose fidelity to the ebit is
\begin{align}
F = \frac{(1+p)^2}{2+2p^2}
\end{align}
with a probability of success
\begin{align}
p_\text{succ} = \frac{1+p^2}{2}.
\end{align}
\end{proposition}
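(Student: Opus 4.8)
The plan is to track the action of the DEJMPS circuit (Fig.~\ref{sfig:DEJMPS_protocol}) on the joint state $\rho_{A_0B_0}(p)\otimes\rho_{A_1B_1}(p)$ through its three stages --- the local $R_x(\pm\pi/2)$ rotations, the local CNOTs, and the measurement of the $A_1B_1$ pair --- and then read off the fidelity and success probability of the surviving $A_0B_0$ state conditioned on the outcomes agreeing. First I would expand $\rho_{AB}(p)=p\proj{\Phi^+}+(1-p)\proj{00}$ in the Bell basis: since $\ket{00}=\tfrac1{\sqrt2}(\ket{\Phi^+}+\ket{\Phi^-})$, one finds $\rho_{AB}(p)$ has diagonal Bell components $(\tfrac{1+p}{2},0,\tfrac{1-p}{2},0)$ together with an off-diagonal $\Phi^+$--$\Phi^-$ coherence of magnitude $\tfrac{1-p}{2}$. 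It is cleanest to carry the $2\times2$ block in the $\{\ket{\Phi^+},\ket{\Phi^-}\}$ subspace explicitly, i.e. write $\rho_{AB}(p)=\tfrac12\big(\ket{\Phi^+}+(1-p)\ket{\Phi^-}\big)\big(\bra{\Phi^+}+(1-p)\bra{\Phi^-}\big)/\text{(norm)}$ --- in fact $\rho_{AB}(p)$ restricted to that subspace is rank one, which will simplify the bookkeeping considerably.

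Next I would use the standard fact (citing the DEJMPS analysis in \cite{deutsch1996quantum,rozpkedek2018optimizing}) that on Bell-diagonal inputs the DEJMPS twirl+CNOT+measure step acts on the tuples by a known bilinear rule: the $R_x(\pm\pi/2)$ pair permutes Bell coefficients (swapping the roles of $\Psi^\pm$ with a sign pattern), the bilateral CNOT maps the product of two Bell-diagonal 4-tuples to a new 4-tuple via a fixed quadratic form, and the coincidence-measurement keeps the terms where the two copies' $\Phi/\Psi$ labels match, renormalizing by the success probability $p_\text{succ}$ which equals the sum of those surviving weights. The twist here is that the S state is \emph{not} Bell-diagonal, so I must carry the $\Phi^+$--$\Phi^-$ coherence through as well; but because the $R_x$ gates and CNOTs are real in the computational basis and preserve the $\{\Phi^+,\Phi^-\}\oplus\{\Psi^+,\Psi^-\}$ grading, the coherence only couples $\Phi^+\leftrightarrow\Phi^-$ and never leaks into the $\Psi$ sector, so the computation stays a small finite linear-algebra problem (effectively $3\times3$ or $4\times4$). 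I would just compute the post-measurement $A_0B_0$ operator, extract $F=\bra{\Phi^+}\sigma_{A_0B_0}\ket{\Phi^+}$ and $p_\text{succ}$, and simplify.

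The main obstacle --- really the only non-routine part --- is correctly handling the off-diagonal coherence of the S state under the bilateral CNOT: I need to be careful that the DEJMPS formula usually quoted for Bell-\emph{diagonal} states is being applied only to the diagonal part, and that the cross terms (products of a diagonal entry of one copy with the coherence of the other, and coherence times coherence) are tracked with the right signs coming from the $R_x(+\pi/2)$ on Alice versus $R_x(-\pi/2)$ on Bob. Once that is done, the coincidence condition on the measured pair selects a handful of terms, and I expect the numerator to collapse to $\tfrac12(1+p)^2$-type expressions and the denominator to $\tfrac12(1+p^2)$, giving $F=(1+p)^2/(2+2p^2)$ and $p_\text{succ}=(1+p^2)/2$ as claimed. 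As a sanity check I would verify the endpoints: at $p=1$ the input is the pure ebit, so $F=1$ and $p_\text{succ}=1$, and at $p=0$ the input is $\proj{00}$ (separable), where the formula gives $F=1/2$ and $p_\text{succ}=1/2$, consistent with CNOT+measure on $\ket{0000}$ always succeeding and yielding $\ket{00}$ whose ebit overlap is $1/2$.
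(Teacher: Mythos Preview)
Your overall strategy---track the DEJMPS circuit on $\rho^{\otimes 2}$ and read off the conditional output---is sound and is essentially what the paper does, but the paper works directly in the computational basis by brute-force matrix multiplication (write the $4\times 4$ matrix of the S state, push $\rho\otimes\rho$ through the circuit, project on the $00$ and $11$ outcomes). Your Bell-basis bookkeeping is a reasonable alternative, but several of the structural claims you rely on are incorrect and would derail the calculation as you have planned it.

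First, the $R_x(\pm\pi/2)$ gates are \emph{not} real in the computational basis; $R_x(\theta)=\cos(\theta/2)I-i\sin(\theta/2)X$ has imaginary off-diagonal entries. More importantly, the pair $R_x(+\pi/2)\otimes R_x(-\pi/2)$ does \emph{not} preserve the $\{\Phi^+,\Phi^-\}$ subspace: one checks directly that it fixes $\ket{\Phi^+}$ but sends $\ket{\Phi^-}\mapsto i\ket{\Psi^-}$ (this is exactly the diagonal permutation $(p_0,p_1,p_2,p_3)\mapsto(p_0,p_1,p_3,p_2)$ the paper quotes, now seen at the level of state vectors). So the $\Phi^+$--$\Phi^-$ coherence of the S state becomes a $\Phi^+$--$\Psi^-$ coherence after the rotations, contrary to your claim that it ``never leaks into the $\Psi$ sector.'' The calculation can still be carried out in the Bell basis, but you must follow the coherence into the $\Psi^-$ sector before the bilateral CNOT and the coincidence measurement; your proposed shortcut based on a preserved grading fails. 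Two smaller points: the S state restricted to $\operatorname{span}\{\Phi^+,\Phi^-\}$ is the full state and has rank two (determinant $p(1-p)/2$), not rank one; and your $p=0$ sanity check is internally inconsistent---you forget the $R_x$ layer, and ``always succeeding'' would give $p_{\text{succ}}=1$, not the $1/2$ the formula predicts. Given these pitfalls, the paper's direct $16\times 16$ matrix computation is actually the cleaner route here.
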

\begin{proof}
By its definition in Equation~\eqref{seq:s_state}, an S state $\rho$ with parameter $p$ can be written in the matrix form as
\begin{align}\label{seq:s_state_matrix}
\rho=
\begin{pmatrix}
1-\frac{p}{2} & 0 & 0 & \frac{p}{2} \\
0 & 0 & 0 & 0 \\
0 & 0 & 0 & 0 \\
\frac{p}{2} & 0 & 0 & \frac{p}{2} \\
\end{pmatrix}.
\end{align}
Applying the circuit in Fig.~\ref{sfig:DEJMPS_protocol} to two copies of such an state, Alice and Bob both get $0$ for measurement outcomes with a probability of $p_{00} = (1+p^2)/4$. By matrix calculation, we obtain the post-measurement state of the unmeasured copy in this case as
\begin{align}
\sigma_-=
\begin{pmatrix}
\alpha & -\beta & -\beta & \alpha \\
-\beta & \beta & \beta & -\beta \\
-\beta & \beta & \beta & -\beta \\
\alpha & -\beta & -\beta & \alpha \\
\end{pmatrix},
\end{align}
where $\alpha = (1+p)^2/(4+4p^2)$ and $\beta = (1-p)^2/(4+4p^2)$. The probability that Alice and Bob both get $1$ for measurement outcomes is $p_{11} = (1+p^2)/4$, and the post-measurement state in this case is 
\begin{align}
\sigma_+=
\begin{pmatrix}
\alpha & \beta & \beta & \alpha \\
\beta & \beta & \beta & \beta \\
\beta & \beta & \beta & \beta \\
\alpha & \beta & \beta & \alpha \\
\end{pmatrix}.
\end{align}
According to the definition of fidelity, the fidelity of state $\sigma_\pm$ to the ebit is
\begin{align}
F &= \text{Tr}(\sigma_\pm|\Phi^+\rangle\langle\Phi^+|)
 = \frac{(1+p)^2}{2+2p^2}.
\end{align}
The probability of Alice and Bob arriving at state $\sigma_\pm$ is
\begin{align}
p_\text{succ} = p_{00} + p_{11} = \frac{1+p^2}{4} + \frac{1+p^2}{4} = \frac{1+p^2}{2}.
\end{align}
\end{proof}

With LOCCNet, we are able to find a new protocol that achieves a higher fidelity than the DEJMPS protocol when distilling two copies of an S state. Indeed, we show in the main text that this protocol is optimal in the sense that it achieves the highest possible fidelity. With some simplification, we obtain a circuit shown in Fig.~\ref{sfig:s_state_distill_circuit}. Below, we offer analysis on the performance of this optimized protocol in Proposition~\ref{sprop:s_state_fid}.

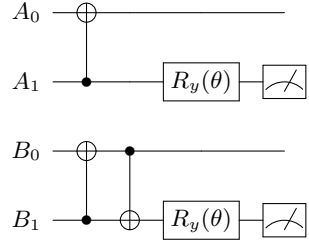
\begin{figure}[!hbtp]
\centerline{
\Qcircuit @C=1em @R=1.7em {
\lstick{A_0} & \targ &\qw &\qw &\qw \\
\lstick{A_1} & \ctrl{-1} &\qw & \gate{R_y(\theta)} &\meter \\
\lstick{B_0} & \targ & \ctrl{1} &\qw &\qw \\
\lstick{B_1} & \ctrl{-1} & \targ & \gate{R_y(\theta)} &\meter \\
}
}
\caption{The simplified circuit of a distillation protocol learned by LOCCNet for two copies of some S state, $\rho_{A_0B_0}$ and $\rho_{A_1B_1}$. The rotation angles of both $R_y$ gates are $\theta = \arccos(1-p)+\pi$, which depends on the parameter $p$ of the S states to be distilled.}
\label{sfig:s_state_distill_circuit}
\end{figure}

\begin{proposition}\label{sprop:s_state_fid}
For two copies of an S state with parameter $p$, the protocol illustrated in Fig.~\ref{sfig:s_state_distill_circuit} outputs a state whose fidelity to the ebit is
\begin{align}
F = \frac{1+\sqrt{2p-p^2}}{2}
\end{align}
with probability $p_\text{succ} = p^2 - \frac{p^3}{2}$ of success.
\end{proposition}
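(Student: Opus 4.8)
The plan is to compute directly the action of the circuit in Fig.~\ref{sfig:s_state_distill_circuit} on the two-copy input $\rho_{A_0B_0}(p)\otimes\rho_{A_1B_1}(p)$, condition on the success event (both $R_y$-rotated qubits $A_1$ and $B_1$ measuring $0$), and read off the resulting state on $A_0B_0$ together with the normalization constant, which yields $p_\text{succ}$. First I would write the S state in the $4\times4$ matrix form already recorded in Equation~\eqref{seq:s_state_matrix} and take the tensor product to get the $16$-dimensional input density matrix, ordered as $A_0A_1B_0B_1$ (or whatever qubit ordering makes the gate sequence cleanest). The gate sequence has three layers: a $\text{CNOT}$ from $A_1$ to $A_0$ and simultaneously from $B_1$ to $B_0$ (the first column of controls/targets), then a $\text{CNOT}$ from $B_0$ to $B_1$ (the cross-copy gate in the middle), and finally $R_y(\theta)$ on $A_1$ and on $B_1$ with $\theta=\arccos(1-p)+\pi$. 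Each of these is a small permutation-or-rotation matrix, so the conjugation $\rho\mapsto G\rho G^\dagger$ is a routine but lengthy computation; I would exploit the sparsity of the input (only the $\{00,11\}$ sector of each copy is populated) to keep the bookkeeping manageable.

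Next I would apply the measurement: project $A_1$ and $B_1$ onto $\ket{0}\!\bra{0}$, take the partial trace over $A_1B_1$, and collect the trace of the unnormalized output as $p_\text{succ}$. The claim to verify is that this trace equals $p^2-p^3/2$; the appearance of $\sqrt{2p-p^2}$ in the fidelity signals that the choice $\cos\theta = -(1-p)$, i.e. $\sin\theta = \sqrt{1-(1-p)^2} = \sqrt{2p-p^2}$, is exactly what makes the off-diagonal (coherence) term of the surviving $A_0B_0$ block come out proportional to $\sqrt{2p-p^2}$. I would then check that the normalized post-measurement state is Bell-diagonal in the $\{\ket{\Phi^+},\ket{\Phi^-}\}$ subspace, i.e. of the form $\sigma_{AB}=F\proj{\Phi^+}+(1-F)\proj{\Phi^-}$ as asserted in the main text, and finally compute $F=\bra{\Phi^+}\sigma_{AB}\ket{\Phi^+}=\tfrac{1}{2}(1+\sqrt{2p-p^2})$ from the diagonal and off-diagonal entries of that $2\times2$ block.

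The main obstacle is purely computational stamina rather than conceptual difficulty: carrying a $16\times16$ conjugation through three gate layers by hand is error-prone, and the $R_y(\theta)$ rotations introduce $\cos(\theta/2)$ and $\sin(\theta/2)$ factors that must be re-expressed via the half-angle identities in terms of $\cos\theta = -(1-p)$ before the answer simplifies. I would mitigate this by tracking only the nonzero matrix elements (the input lives in an $8$-dimensional coordinate subspace, and the CNOT layers only permute basis labels), and by verifying two sanity checks at the end — that $F\to 1$ and $p_\text{succ}\to 1/2$ as $p\to 1$ (two copies of a perfect ebit, and indeed $\sqrt{2-1}=1$, $1-1/2=1/2$), and that $F\to 1/2$, $p_\text{succ}\to 0$ as $p\to 0$ — which pin down the formulas up to the kind of sign or factor-of-two slip these calculations invite. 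A parallel check against the stated post-measurement form $\sigma_{AB}=F\proj{\Phi^+}+(1-F)\proj{\Phi^-}$ (in particular that the $\ket{\Psi^\pm}$ components vanish) provides a further internal consistency test.
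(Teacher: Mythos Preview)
Your proposal is correct and follows essentially the same route as the paper: direct matrix computation of the circuit's action on $\rho^{\otimes 2}$, projection onto the $00$ outcome on $A_1B_1$, and identification of the unnormalized output on $A_0B_0$ as a rank-two Bell-diagonal state with off-diagonal entry $\sqrt{2p-p^2}/2$. The paper simply states the result of the ``matrix calculation'' and then rewrites $\sigma$ as $\alpha\proj{\Phi^+}+\beta\proj{\Phi^-}$, which is exactly the endpoint you describe; your added remarks about exploiting sparsity, the half-angle simplification, and the $p\to0,1$ sanity checks are just organizational scaffolding around the same computation. (One small slip: since $\theta=\arccos(1-p)+\pi$ lies in $(\pi,3\pi/2)$, one has $\sin\theta=-\sqrt{2p-p^2}$ rather than $+\sqrt{2p-p^2}$; this sign is precisely the kind your limit checks would catch.)
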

\begin{proof}
The matrix form of an S state $\rho$ with parameter $p$ is given in Equation~\eqref{seq:s_state_matrix}. Applying the circuit in Fig.~\ref{sfig:s_state_distill_circuit} to two copies of such an state, Alice and Bob both get $0$ for measurement outcomes with a probability of $p_{00} = p^2-p^3/2$. By matrix calculation, we obtain the post-measurement state of the unmeasured copy as
\begin{align}
\sigma=
\begin{pmatrix}
\frac{1}{2} & 0 & 0 & \frac{\sqrt{2p-p^2}}{2} \\
0 & 0 & 0 & 0 \\
0 & 0 & 0 & 0 \\
\frac{\sqrt{2p-p^2}}{2} & 0 & 0 & \frac{1}{2} \\
\end{pmatrix}.
\end{align}
Note that the state $\sigma$ can be written as
\begin{align}
\sigma =
\begin{pmatrix}
\alpha+\beta & 0 & 0 & \alpha-\beta \\
0 & 0 & 0 & 0 \\
0 & 0 & 0 & 0 \\
\alpha-\beta & 0 & 0 & \alpha+\beta \\
\end{pmatrix}
= \alpha\proj{\Phi^+} + \beta\proj{\Phi^-},
\end{align}
where $\alpha = (1+\sqrt{2p-p^2})/2$, $\beta = (1-\sqrt{2p-p^2})/2$, and $\ket{\Phi^-} = 1/\sqrt{2}(\ket{00}-\ket{11})$.
By the definition of fidelity, we have
\begin{align}
F &= \text{Tr}(\sigma\proj{\Phi^+})= \text{Tr}((\alpha\proj{\Phi^+} + \beta\proj{\Phi^-})\proj{\Phi^+}) \\
&= \alpha = \frac{1+\sqrt{2p-p^2}}{2}
\end{align}
since $\bra{\Phi^+}\Phi^+\rangle = 1$ and $\bra{\Phi^-}\Phi^+\rangle = 0$ for $\ket{\Phi^-}$ is orthogonal to $\ket{\Phi^+}$. The probability of Alice and Bob arriving at state $\sigma$ is
\begin{align}
p_\text{succ} = p_{00} = p^2 - \frac{p^3}{2}.
\end{align}
\end{proof}

\textbf{Isotropic state.}\label{ssec:iso_state}
A two-qubit isotropic state is of the form
\begin{align}
\rho_{AB} = p|\Phi^+\rangle\langle\Phi^+| + (1-p)\frac{I}{4},
\end{align}
where $p\in[0,1]$. Alternatively, one can write an isotropic state as a Bell diagonal state
\begin{align}
\rho_{AB} = \left(\frac{1+3p}{4}, \frac{1-p}{4}, \frac{1-p}{4}, \frac{1-p}{4}\right).
\end{align}

\textit{Distillation with the DEJMPS protocol.}
The DEJMPS protocol is known to achieve a high fidelity when distilling two copies of an Bell diagonal state, and the resulting state in the event of success is still a Bell diagonal state. Specifically, the DEJMPS protocol's circuit, excluding the measurements, acts on an Bell diagonal state as a permutation of the Bell states' coefficients.
For a Bell diagonal state
\begin{align}
\rho = p_0\proj{\Phi^+} + p_1\proj{\Psi^+} + p_2\proj{\Phi^-} + p_3\proj{\Psi^-},
\end{align}
the operator $R_x(\pi/2)\otimes R_x(-\pi/2)$ maps it to another Bell diagonal state
\begin{align}
\sigma = p_0\proj{\Phi^+} + p_1\proj{\Psi^+} + p_3\proj{\Phi^-} + p_2\proj{\Psi^-}. \label{eq:dejmps_rx}
\end{align}

As stated in Eq.~\eqref{eq:dejmps_rx}, a pair of $R_x(\pm\pi/2)$ gates transforms a Bell diagonal state $(p_0, p_1, p_2, p_3)$ to another Bell diagonal state $(p_0, p_1, p_3, p_2)$. Similarly, a pair of bilateral CNOT gates shown in Fig.~\ref{sfig:DEJMPS_protocol} acts on the tensor product of two Bell diagonal states as a permutation of coefficients. The effect of the bilateral CNOT gates is summarized as a Table in~\cite{Bennett1996}. Specifically, for a pair of Bell diagonal states $(a_0, a_1, a_2, a_3)$ and $(b_0, b_1, b_2, b_3)$, applying the bilateral CNOT gates on the state
\begin{align}
(p_0, p_1,\dots, p_{14}, p_{15}) &= (a_0, a_1, a_2, a_3)\otimes(b_0, b_1, b_2, b_3) \\
&\equiv (a_0b_0, a_0b_1,\dots, a_3b_2, a_3b_3)
\end{align}
results in a state
\begin{align}
\text{CNOT}(p_0, p_1,\dots, p_{14}, p_{15}) = (p_0, p_1, p_{10}, p_{11}, p_5, p_4, p_{15}, p_{14}, p_8, p_9, p_2, p_3, p_{13}, p_{12}, p_7, p_6).
\end{align}

Although the coincidence measurement, referring to Alice and Bob getting identical measurement outcomes, on a Bell diagonal state is not a Bell basis permutation, the post-measurement state is still a Bell diagonal state. To be specific, note that since only $00$ and $11$ are counted as valid results, Bell states $\proj{\Psi^\pm}$ are filtered out and thus a Bell diagonal state $(p_0, p_1, p_2, p_3)$ collapses to $(p_0, 0, p_2, 0)$ up to a normalization factor after the coincidence measurement.

The final fidelity and the probability of success achieved by the DEJMPS protocol can be derived by permuting coefficients in the Bell basis, and the results is given in~\cite{deutsch1996quantum}. For self-consistency, we give a derivation as below.

\begin{proposition}[\cite{deutsch1996quantum}]\label{sprop:dejmps_2copy}
For two copies of a Bell diagonal state $(a_0, a_1, a_2, a_3)$, the DEJMPS protocol outputs a state whose fidelity to the ebit is
\begin{align}
F = \frac{a_0^2+a_3^2}{p_\text{succ}},
\end{align}
where $p_\text{succ} = a_0^2+a_3^2 + a_1^2+a_2^2 + 2a_0a_3 + 2a_1a_2$ is the probability of success.
\end{proposition}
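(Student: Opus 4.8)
The plan is to push the DEJMPS circuit through the Bell-diagonal representation, reusing the three facts already assembled just above the statement: (i) the bilateral $R_x(\pm\pi/2)$ layer acts on each copy as the coefficient permutation $(p_0,p_1,p_2,p_3)\mapsto(p_0,p_1,p_3,p_2)$ (Eq.~\eqref{eq:dejmps_rx}); (ii) the bilateral CNOTs act on the tensor product of two Bell-diagonal $4$-tuples as the fixed permutation of the sixteen coefficients displayed above; and (iii) the coincidence measurement on the target copy keeps exactly the components in which that copy is $\proj{\Phi^+}$ or $\proj{\Phi^-}$, discards that copy, and renormalizes, with the surviving state of the remaining copy still Bell diagonal.

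First I would set $c=(c_0,c_1,c_2,c_3):=(a_0,a_1,a_3,a_2)$, the common tuple carried by each copy after the $R_x$ layer. Forming the product $p_{4i+j}=c_ic_j$ and applying the CNOT permutation produces a new sixteen-tuple $q$; only the eight entries $q_{4i}$ and $q_{4i+2}$ for $i=0,1,2,3$ matter, namely those in which the measured (second) copy lands in $\proj{\Phi^+}$ or $\proj{\Phi^-}$. Reading the permutation off, for the unmeasured copy these group as $c_0^2,\,c_2^2$ at $i=0$; $c_1^2,\,c_3^2$ at $i=1$; $c_0c_2,\,c_0c_2$ at $i=2$; and $c_1c_3,\,c_1c_3$ at $i=3$. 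Hence, conditioned on coincidence, the output is the Bell-diagonal state with unnormalized coefficients $(c_0^2+c_2^2,\ c_1^2+c_3^2,\ 2c_0c_2,\ 2c_1c_3)$, and its normalization is precisely $p_\text{succ}$.

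Substituting $c_0=a_0,\ c_1=a_1,\ c_2=a_3,\ c_3=a_2$ turns the $\proj{\Phi^+}$ weight into $a_0^2+a_3^2$ and the total into $a_0^2+a_3^2+a_1^2+a_2^2+2a_0a_3+2a_1a_2$, giving both claimed formulas. One still has to confirm that this total is genuinely the probability of the coincidence outcome: measuring each of $\proj{\Phi^\pm}$ yields the outcomes $00$ and $11$ with probability $1/2$ each, so the two $1/2$ factors recombine and $P(\mathrm{coincidence})=\sum_i(q_{4i}+q_{4i+2})$ with no stray constants.

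The only real hazard is the bookkeeping: getting the CNOT permutation indices right and fixing the "which copy is measured" convention. In Fig.~\ref{sfig:DEJMPS_protocol} the unmeasured pair $A_0B_0$ is the control and sits in the first tensor factor, so it is the surviving system, while $A_1B_1$ is the measured target in the second factor; mislabeling these swaps the roles of the $i$ and $j$ indices. Once that is pinned down the argument is a short finite computation, with no analytic estimates required.
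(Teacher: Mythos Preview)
Your proposal is correct and follows essentially the same route as the paper: both apply the $R_x$ permutation to obtain $(a_0,a_1,a_3,a_2)^{\otimes 2}$, push through the fixed CNOT permutation of the sixteen Bell-basis coefficients, and then keep only the entries where the measured copy lands in $\proj{\Phi^\pm}$ to read off the unnormalized output $(a_0^2+a_3^2,\,a_1^2+a_2^2,\,2a_0a_3,\,2a_1a_2)$. Your extra remark that the two $1/2$ factors from the $00$ and $11$ outcomes recombine so that no stray constant appears in $p_\text{succ}$ is a helpful clarification the paper leaves implicit.
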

\begin{proof}
After the first layer of $R_x$ gates, the input state $(a_0,a_1,a_2,a_3)^{\otimes2}$ becomes $(a_0,a_1,a_3,a_2)^{\otimes2}$ according to Eq.~\eqref{eq:dejmps_rx}. Then, transformed by the layer of bilateral CNOT gates, the state $(p_0, p_1,\dots, p_{14}, p_{15}) = (a_0,a_1,a_3,a_2)^{\otimes2}$ becomes $(p_0, p_1, p_{10}, p_{11}, p_5, p_4, p_{15}, p_{14}, p_8, p_9, p_2, p_3, p_{13}, p_{12}, p_7, p_6)$. The coincidence measurement in the computational basis on the second copy filters out $\proj{\Psi^\pm}$ and the remaining state is either $\proj{00}$ or $\proj{11}$. In either case, the first copy becomes
\begin{align}
\sigma &= (p_0+p_{10}, p_5+p_{15}, p_8+p_2, p_{13}+p_7) \\
&= (a_0^2+a_3^2, a_1^2+a_2^2, a_3a_0+a_0a_3, a_2a_1+a_1a_2) \\
&= (a_0^2+a_3^2, a_1^2+a_2^2, 2a_0a_3, 2a_1a_2)
\end{align}
up to a normalization factor. The sum of all the unnormalized coefficients is the probability of measuring $00$ or $11$, i.e.,
\begin{align}
p_\text{succ} = a_0^2+a_3^2 + a_1^2+a_2^2 + 2a_0a_3 + 2a_1a_2.
\end{align}
The the normalized output state is $\sigma/p_\text{succ}$, and its fidelity to the ebit, which is the coefficient before $\proj{\Phi^+}$, is
\begin{align}
F = \frac{a_0^2+a_3^2}{p_\text{succ}}.
\end{align}
\end{proof}

Since the DEJMPS protocol is for distilling two copies of a Bell diagonal state, to distill four copies of an isotropic state, we can follow these steps. First, we divide them into two groups where each group consists of two copies. Then we apply the DEJMPS protocol to both groups independently. In the event of success, we will get two copies of a Bell diagonal state, to which we apply the DEJMPS protocol again.

\begin{proposition}
For four copies of an isotropic state $((1+3p)/4, (1-p)/4, (1-p)/4, (1-p)/4)$, the generalized DEJMPS protocol given above outputs a state whose fidelity to the ebit is
\begin{align}
F = \frac{1+10p^2+8p^3+13p^4}{4+8p^2+20p^4}
\end{align}
with $p_\text{succ} = \frac{1}{8}\left(1+2p^2+5p^4\right)$ probability of success.
\end{proposition}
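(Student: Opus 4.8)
The plan is to invoke Proposition~\ref{sprop:dejmps_2copy} twice — once for each level of the two-level recursive scheme — and then to collapse the resulting polynomials.

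First I would write the isotropic state in the Bell-diagonal $4$-tuple notation as $\rho_{AB} = (a_0,a_1,a_2,a_3)$ with $a_0 = (1+3p)/4$ and $a_1=a_2=a_3 = b$ where $b := (1-p)/4$; the point is that three of the four coefficients coincide. Dividing the four copies into two groups of two and running DEJMPS on each, Proposition~\ref{sprop:dejmps_2copy} says each group succeeds with probability $p^{(1)}_\mathrm{succ} = a_0^2 + 2a_0 b + 5b^2$ and, on success, produces the normalized Bell-diagonal state
\begin{align}
(c_0,c_1,c_2,c_3) = \frac{1}{p^{(1)}_\mathrm{succ}}\bigl(\,a_0^2+b^2,\ 2b^2,\ 2a_0 b,\ 2b^2\,\bigr),
\end{align}
where I have used $a_1=a_2=a_3=b$ in the general formula of Proposition~\ref{sprop:dejmps_2copy}. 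A reassuring partial check here is that $p^{(1)}_\mathrm{succ}$ simplifies to $(1+p^2)/2$, the known DEJMPS success probability on isotropic inputs, and that $c_1=c_3$, a symmetry I can reuse in the next round.

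Next I would apply Proposition~\ref{sprop:dejmps_2copy} a second time to two copies of $(c_0,c_1,c_2,c_3)$, obtaining final fidelity $F = (c_0^2+c_3^2)/p^{(2)}_\mathrm{succ}$ with $p^{(2)}_\mathrm{succ} = (c_0+c_3)^2 + (c_1+c_2)^2$. Since both first-round groups must succeed and round two acts on the normalized round-one output, the overall probability is multiplicative, $p_\mathrm{succ} = (p^{(1)}_\mathrm{succ})^2\, p^{(2)}_\mathrm{succ}$; substituting the $c_i$ in terms of $a_0$ and $b$, the $(p^{(1)}_\mathrm{succ})^2$ factors cancel from $F$ and I get
\begin{align}
F = \frac{(a_0^2+b^2)^2 + 4b^4}{(a_0^2+3b^2)^2 + 4b^2(a_0+b)^2}, \qquad p_\mathrm{succ} = (a_0^2+3b^2)^2 + 4b^2(a_0+b)^2.
\end{align}

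Finally I would put $a_0 = (1+3p)/4$, $b = (1-p)/4$ and expand. To keep the bookkeeping painless I would first clear the common factor $1/4$ out of the $4$-tuple (which only rescales each homogeneous expression by a fixed constant), after which the partial sums collapse nicely: $a_0^2+2a_0b+5b^2$ is proportional to $1+p^2$, $a_0^2+3b^2$ to $1+3p^2$, and $2b(a_0+b)$ to $1-p^2$, so that $(1+3p^2)^2+(1-p^2)^2 = 2(1+2p^2+5p^4)$ and $(a_0^2+b^2)^2+4b^4$ is proportional to $1+10p^2+8p^3+13p^4$. Restoring the constants yields $p_\mathrm{succ} = (1+2p^2+5p^4)/8$ and $F = (1+10p^2+8p^3+13p^4)/(4+8p^2+20p^4)$, as claimed. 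The only place where care is actually needed is the multiplicativity of the success probabilities across the two rounds — second-round quantities must be computed from the \emph{normalized} first-round state — and after that the argument is a routine polynomial expansion.
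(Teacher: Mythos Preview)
Your proposal is correct and follows essentially the same approach as the paper: apply Proposition~\ref{sprop:dejmps_2copy} to each pair of copies, then apply it again to the two resulting states, and combine the success probabilities multiplicatively. The only cosmetic difference is that you keep the intermediate expressions in terms of $a_0$ and $b$ and substitute $p$ at the end, whereas the paper substitutes $p$ immediately after the first round; this makes your bookkeeping a bit cleaner but the argument is the same.
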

\begin{proof}
For two copies of the isotropic state to be distilled, the DEJMPS protocol outputs a state
\begin{align}
\rho &= \left(\frac{(1+3p)^2+(1-p)^2}{16p'_\text{succ}}, \frac{(1-p)^2+(1-p)^2}{16p'_\text{succ}}, \frac{2(1+3p)(1-p)}{16p'_\text{succ}}, \frac{2(1-p)(1-p)}{16p'_\text{succ}}\right) \\
&= \left(\frac{1+2p+5p^2}{8p'_\text{succ}}, \frac{1-2p+p^2}{8p'_\text{succ}}, \frac{1+2p-3p^2}{8p'_\text{succ}}, \frac{1-2p+p^2}{8p'_\text{succ}}\right)
\end{align}
up to a normalization factor with a probability of success
\begin{align}
p'_\text{succ} &= \frac{1}{16}\left((1+3p)^2 + 3(1-p)^2 + 2(1+3p)(1-p) + 2(1-p)^2\right) 
= \frac{1+p^2}{2}.
\end{align}

Then, the probability of successful distillation for both groups is $p_\text{succ}^{'2}$. In that case, applying the DEJMPS protocol to the resulting two copies of state $\rho$ gives a state whose fidelity to the ebit is
\begin{align}
F &= \frac{(1+2p+5p^2)^2 + (1-2p+p^2)^2}{64p_\text{succ}^{'2}p''_\text{succ}} = \frac{1+10p^2+8p^3+13p^4}{8(1+p^2)^2p''_\text{succ}}
\end{align}
with a probability of success
\begin{align}
p''_\text{succ} = \frac{1+2p^2+5p^4}{2(1+p^2)^2}.
\end{align}
Substituting $p''_\text{succ}$ into $F$, we have
\begin{align}
F = \frac{1+10p^2+8p^3+13p^4}{4+8p^2+20p^4}.
\end{align}
The sucess probability of the whole process is
\begin{align}
p_\text{succ} = p_\text{succ}^{'2}p''_\text{succ} = \frac{1}{8}\left(1+2p^2+5p^4\right).
\end{align}
\end{proof}

\textit{The protocol found with LOCCNet.}
As we show in the main text, the DEJMPS protocol does not fully exploit the resources encoded in four copies of an isotropic states, and there is a protocol learned by LOCCNet that achieves a higher fidelity.
\begin{figure}[!hbtp]
\centerline{
\Qcircuit @C=1em @R=1.7em {
\lstick{A_0} & \ctrl{1} & \qw & \qw & \targ & \qw &\qw \\
\lstick{A_1} & \targ & \ctrl{1} & \qw & \qw & \gate{R_x(+\frac{\pi}{2})} & \meter \\
\lstick{A_2} & \qw & \targ & \ctrl{1} & \qw & \gate{R_x(+\frac{\pi}{2})} & \meter \\
\lstick{A_3} & \qw & \qw & \targ & \ctrl{-3} & \gate{R_x(+\frac{\pi}{2})} & \meter \\
}
}
\caption{The simplified circuit of a protocol learned by LOCCNet for entanglement distillation with four copies of some isotropic state. This circuit only includes Alice's operation, while Bob's operation is identical to Alice's, except that the rotation angles of Bob's $R_x$ gates are $-\pi/2$.}
\label{sfig:4copy_iso_circuit}
\end{figure}

\begin{proposition}
For four copies of an isotropic state $\rho$ with parameter $p$, the protocol illustrated in Fig.~\ref{sfig:4copy_iso_circuit} outputs a state whose fidelity to the ebit is
\begin{align}
F = \frac{1-2p+9p^2}{4-8p+12p^2}
\end{align}
with a probability of success
\begin{align}
p_\text{succ} = \frac{1+4p^3+3p^4}{8}.
\end{align}
\end{proposition}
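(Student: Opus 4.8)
The plan is to extend the Bell-basis bookkeeping used for the DEJMPS protocol (Proposition~\ref{sprop:dejmps_2copy}), since every gate in Fig.~\ref{sfig:4copy_iso_circuit} is a Clifford gate that permutes the product Bell basis. Label the four shared pairs $i\in\{0,1,2,3\}$ by $(z_i,x_i)\in\mathbb{F}_2^2$, with $\Phi^+\!=\!(0,0)$, $\Psi^+\!=\!(0,1)$, $\Phi^-\!=\!(1,0)$, $\Psi^-\!=\!(1,1)$; then four copies of the isotropic state $\rho(p)=\big(\tfrac{1+3p}{4},\tfrac{1-p}{4},\tfrac{1-p}{4},\tfrac{1-p}{4}\big)$ is the classical mixture over $(\mathbb{F}_2^2)^4$ in which the four label-pairs are independent, each with $\Pr[(z_i,x_i)=v]=\tfrac{1-p}{4}+p\,\delta_{v,(0,0)}$. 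In this language a bilateral CNOT from a source pair $s$ to a target pair $t$ acts as $(z_s,x_s;z_t,x_t)\mapsto(z_s\oplus z_t,\,x_s;\,z_t,\,x_s\oplus x_t)$ — the $\mathbb{F}_2$ form of the bilateral-CNOT permutation table of~\cite{Bennett1996} used in the proof of Proposition~\ref{sprop:dejmps_2copy} — and the bilateral $R_x(\pm\pi/2)$ acts as $(z,x)\mapsto(z,\,x\oplus z)$, which is exactly the swap $\Phi^-\leftrightarrow\Psi^-$ of Eq.~\eqref{eq:dejmps_rx}.

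First I would compose the four CNOTs of the cyclic cascade $0{\to}1{\to}2{\to}3{\to}0$ in order and then apply the $R_x(\pm\pi/2)$ layer on pairs $1,2,3$; a short $\mathbb{F}_2$ calculation gives that the three measured pairs carry amplitude bits $x_1'=x_0\oplus x_1\oplus z_1\oplus z_2$, $x_2'=x_0\oplus x_1\oplus x_2\oplus z_2\oplus z_3$, $x_3'=x_0\oplus x_1\oplus x_2\oplus x_3\oplus z_0\oplus z_1\oplus z_3$, while the retained pair $0$ ends up in the Bell state $(Z,X)=(z_0\oplus z_1,\,x_1\oplus x_2\oplus x_3)$. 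Since $\Phi^\pm$ give coincident computational-basis outcomes and $\Psi^\pm$ give anticoincident ones, for a state diagonal in the product Bell basis summing over the two coincident outcomes of each measured pair shows that the ``outcomes agree on every measured pair'' event is exactly the projection onto $\{x_1'=x_2'=x_3'=0\}$, carrying total weight one (no extra normalization factor). Hence $p_\text{succ}=\Pr[x_1'=x_2'=x_3'=0]$ and $F$ is the conditional probability that $(Z,X)=(0,0)$, both computed over the independent-pairs distribution above.

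These probabilities I would evaluate with a character-sum identity. From the single-pair distribution one gets $\mathbb{E}\big[(-1)^{a z_i+b x_i}\big]=(1-p)\,\delta_{(a,b),(0,0)}+p$, so for the four independent pairs $\mathbb{E}\big[(-1)^{\sum_i(a_i z_i+b_i x_i)}\big]=p^{\,k}$, where $k$ is the number of pairs $i$ with $(a_i,b_i)\neq(0,0)$. Writing $\Pr[L_1=\cdots=L_m=0]=2^{-m}\sum_{T\subseteq[m]}\mathbb{E}\big[(-1)^{\sum_{j\in T}L_j}\big]$, each probability becomes a sum of powers of $p$ indexed by subsets $T$, the exponent on $T$ being the number of pairs appearing in the $\mathbb{F}_2$-sum $\sum_{j\in T}L_j$. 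Applying this to the three forms $L_j=x_j'$ gives $p_\text{succ}=\tfrac18\big(1+4p^3+3p^4\big)$; applying it to the five forms $x_1',x_2',x_3',\,z_0\oplus z_1,\,x_1\oplus x_2\oplus x_3$ gives $F\,p_\text{succ}=\tfrac1{32}\big(1+6p^2+16p^3+9p^4\big)=\tfrac1{32}(1-2p+9p^2)(1+p)^2$. Dividing, and using the factorization $1+4p^3+3p^4=(3p^2-2p+1)(1+p)^2$, yields $F=\dfrac{1-2p+9p^2}{4-8p+12p^2}$. Evaluating the same character sum with $(Z,X)$ replaced by either nonzero value shows the three off-diagonal Bell coefficients of the output coincide, so the output is again isotropic, as claimed in the main text.

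The only real difficulty is the volume of bookkeeping: composing the four-CNOT cascade in the correct order without an index or sign slip, and organizing the $32$-term subset sum for $F\,p_\text{succ}$ so that the powers of $p$ are tallied correctly (here the reduction of each term to $p^{\#(\text{touched pairs})}$ is exactly what keeps it tractable). Conceptually nothing beyond linear algebra over $\mathbb{F}_2$ and elementary probability is involved, so I expect no genuine obstacle — the care is entirely in the arithmetic.
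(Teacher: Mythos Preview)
Your proposal is correct and follows the same underlying strategy as the paper: both track the Bell-diagonal weights through the Clifford circuit by treating each gate as a permutation of product-Bell labels, exactly in the spirit of Proposition~\ref{sprop:dejmps_2copy}. The paper's own proof is little more than a sketch --- it asserts the unnormalized output tuple $\sigma$ and reads off $F$ and $p_\text{succ}$ --- so you have in fact carried out what the paper leaves implicit.

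Where you differ is in the computational organization. The paper (implicitly) tabulates the $4^4=256$ product-Bell configurations directly; you instead encode the labels in $\mathbb{F}_2^2$, observe that the circuit acts $\mathbb{F}_2$-linearly, and compute the relevant probabilities via the character identity $\Pr[L_1=\cdots=L_m=0]=2^{-m}\sum_T\mathbb{E}[(-1)^{\sum_{j\in T}L_j}]$ together with the factorization $\mathbb{E}[(-1)^{\sum_i(a_iz_i+b_ix_i)}]=p^{\#\{i:(a_i,b_i)\neq 0\}}$. This buys you a reduction from a 256-term sum to an $8$-term sum for $p_\text{succ}$ and a $32$-term sum for $F\,p_\text{succ}$, each term determined by a simple support count. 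I have checked your linear forms $x_1',x_2',x_3',Z,X$ against the cascade $0{\to}1{\to}2{\to}3{\to}0$ followed by the $R_x$ layer, and they are correct; the subset sums then give exactly $\tfrac18(1+4p^3+3p^4)$ and $\tfrac{1}{32}(1+6p^2+16p^3+9p^4)$ as you claim, and your factorization $1+4p^3+3p^4=(3p^2-2p+1)(1+p)^2$ finishes it. The isotropy of the output likewise drops out of the same machinery. So: same route, but a cleaner vehicle.
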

\begin{proof}
Similar to the DEJMPS protocol, this optimized protocol consists of $Ry(\pm\pi/2)$ gates, bilateral CNOT gates, and coincidence measurements in the computational basis. Thus, the claimed fidelity and probability of success can be derived by simulating the circuit shown in Fig.~\ref{sfig:4copy_iso_circuit} as permutation on Bell basis. Using similar techniques from the proof of Proposition~\ref{sprop:dejmps_2copy}, we obtain the unnormalized state after three coincidence measurements in the event of success, which is
\begin{align}
\sigma = \left(\frac{1}{32}(1+p)^2(1-2p+9p^2), \frac{1}{32}(1-p^2)^2, \frac{1}{32}(1-p^2)^2, \frac{1}{32}(1-p^2)^2\right).
\end{align}
Then, adding up all the coefficients in $\sigma$, we obtain the probability of success
\begin{align}
p_\text{succ} &= \frac{1}{32}(1+p)^2(1-2p+9p^2) + \frac{3}{32}(1-p^2)^2 = \frac{1+4p^3+3p^4}{8}.
\end{align}
Meanwhile, the normalized $\sigma$'s fidelity to the ebit is
\begin{align}
F &= \frac{1}{32p_\text{succ}}(1+p)^2(1-2p+9p^2) = \frac{1-2p+9p^2}{4-8p+12p^2}.
\end{align}
\end{proof}

\textbf{PPT bound.}\label{ssec:distillation_ppt}
As the mathematical structure of LOCC is complex and difficult to characterize~\cite{Chitambar2014}, we may consider larger but mathematically more tractable classes of operations.  
The operations most frequently employed beyond
LOCC are the PPT operations, which completely preserve the positivity of the partial transpose~\cite{Rains2001}. 
A bipartite quantum operation $\Pi_{AB \to A'B'}$ is called a PPT operation if its  Choi-Jamio\l{}kowski matrix $J_\Pi = \sum_{i,j,m,k} \ket{i_Aj_B}\bra{m_Ak_B} \ox \Pi(\ket{i_Aj_B}\bra{m_Ak_B})$ is positive under partial transpose  across the bipartition of $AA':BB'$, where $\{\ket{i_A}\}$ and $\{\ket{j_B}\}$ are orthonormal bases on Hilbert spaces $A$ and $B$, respectively. 

The entanglement theory under PPT operations has been extensively studied in the literature~(e.g.,~\cite{Audenaert2003,Wang2016d,Matthews2008,Wang2020c,Regula2019,Chitambar2017}) and offers the limitations of LOCC. In particular, the limit of finite-copy entanglement distillation was recently explored in \cite{Fang2017,rozpkedek2018optimizing,Regula2019}. In the following, we compare our results with the PPT bound from \cite{rozpkedek2018optimizing}, which gives the fundamental limits on the fidelity of distillation with given success probability.
To be specific, the maximal fidelity of distilling $D$-dimensional Bell state from $\rho$ with fixed success probability $\delta$ using PPT operations \cite{rozpkedek2018optimizing} is given by
\begin{equation}
\begin{array}{ll}
\operatorname{maximize} & \frac{d_Ad_B}{\delta} \tr \rho_{AB}^{T} M_{AB} \\
\text { subject to } & M_{AB} \geqslant 0, \quad E_{AB} \geqslant 0, \\
& M_{AB}+E_{AB} \leqslant \frac{\mathbb{I}_{AB}}{d_Ad_B}, \\
& M_{AB}^{T_B}+E_{AB}^{T_B} \leqslant \frac{\mathbb{I}_{AB}}{d_Ad_B},  d_Ad_B \tr \left[\rho_{AB}^{T}\left(M_{A B}+E_{AB}\right)\right]=\delta, \\
& M_{AB}^{T_B}+\frac{1}{D+1} E_{AB}^{T_B} \geqslant 0,  -M_{AB}^{T_B}+\frac{1}{D-1} E_{AB}^{T_B} \geqslant 0,
\end{array}
\end{equation}
where $d_A, d_B$ are the dimensions of systems $A$ and $B$, respectively.
Recall that $\rho_{AB}$ is the initial input state that Alice and Bob are attempting to distill and in most examples considered here, 
it will consist of two copies of some two-qubit state. 
\section*{SUPPLEMENTARY NOTE 3: Analysis of LOCC state discrimination}
To explore the power of LOCCNet in state discrimination, we focus on the optimal success probability of discriminating noiseless and noisy Bell states via LOCC. In the following, we present the LOCC protocol from \cite{Walgate2000} for discriminating two Bells states. After that, we show how to distinguish one Bell state from one noisy Bell state using the protocol learned via LOCCNet and compare it with the protocol of the noiseless case.

\textbf{Noiseless case.}\label{ssec:noiseless_bell}
Consider two Bell states, $|\Phi^+\rangle$ and $|\Phi^-\rangle$. Since these two states are pure and orthogonal to each other, there exist an LOCC protocol that can perfectly distinguish between them \cite{Walgate2000}. Here, we give a specific discrimination protocol for these two Bell states. Suppose Alice and Bob share a two-qubit state $\rho_{AB}$, which could be either $\proj{\Phi^+}$ or $\proj{\Phi^-}$. To find out which state it is through LOCC, they can follow the steps below. First, Alice applies a $R_y(\pi/2)$ gate on her qubit followed by a measurement. Then, Alice tell Bob her measurement outcome through classical communication. Receiving the measurement outcome from Alice, Bob applies on his qubit a $R_y$ gate with the rotation angle $\theta$ being $\pi/2$ or $-\pi/2$, corresponding to the case where the communicated measurement outcome is $0$ or $1$, respectively. Finally, Bob measures his qubit. If he gets $0$, then he can be sure that the state $\rho_{AB}$ is $\proj{\Phi^+}$. Otherwise, $\rho_{AB} = \proj{\Phi^-}$. 
The whole process is also illustrated with a circuit shown in Fig.~\ref{sfig:qsd_noiseless_protocol}, which can perfectly discriminate $|\Phi^+\rangle$ and $|\Phi^-\rangle$.

\begin{figure}[!hbtp]
\centerline{
\Qcircuit @C=1em @R=1.7em {
\lstick{A}  & \gate{R_y(\frac{\pi}{2})} & \meter \cwx[1] & \qw & \qw \\
\lstick{B}  & \qw & \gate{R_y(\theta)}& \meter & \qw \\
}
}
\caption{The LOCC protocol distinguishing between the pair of noiseless Bell states $|\Phi^+\rangle$ and $|\Phi^-\rangle$. The rotation angle $\theta$ of Bob's $R_y$ gate is either $\pi/2$ or $-\pi/2$, depending on Alice's measurement outcome.}
\label{sfig:qsd_noiseless_protocol}
\end{figure}
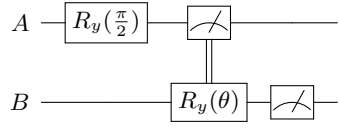

\textbf{Noisy case.}\label{sec:noisy bell}
Quantum noises unavoidably may occur in quantum information processing. One common noise of theoretical and experimental interest is the amplitude damping channel~\cite{Nielsen2010}, which is one of the realistic sources of noise in superconducting quantum processor~\cite{Chirolli2018}. To be specific, an amplitude damping (AD) channel $\mathcal{A}$ with noise parameter $\gamma$ such that $\mathcal{A}(\rho) = E_0\rho E_0^\dagger+E_1\rho E_1^\dagger$ with $E_0=|0\rangle\langle0|+\sqrt{1-\gamma}|1\rangle\langle1|$ and $E_1=\sqrt{\gamma}|0\rangle\langle1|$.
If $|\Phi^-\rangle$ is affected by the amplitude damping noise on each qubit, then the resulting state is $\mathcal{A}\otimes\mathcal{A}(|\Phi^-\rangle\langle\Phi^-|)$.
The goal is now to distinguish between $\Phi_0\equiv|\Phi^+\rangle\langle\Phi^+|$ and $\Phi_1\equiv\mathcal{A}\otimes\mathcal{A}(|\Phi^-\rangle\langle\Phi^-|)$.

\textit{PPT bound.}
The distinguishability of quantum states under PPT (positive partial transpose) POVMs was introduced in \cite{Yu2014a} to better understand the fundamental limits of the local distinguishability of quantum states. To be specific, the PPT POVM used for distinguishing a set of $n$ orthogonal quantum states $\{\rho_1,\dots,\rho_n\}$ can be defined as an $n$-tuple of operators, $(M_k)_{k=1,\dots,n}$, where $M_k$ is PPT for $k=1,\dots,n$ and $\sum_{k=1}^nM_k=I_{AB}$.
The set of PPT POVMs enjoys a more tractable mathematical structure than the LOCC POVMs due to the SDP characterization of PPT condition. 

The optimal success probability of discriminating a collection of quantum states $\{\rho_1,\cdots,\rho_K\}$ using PPT POVMs is given by
\begin{equation}
\begin{split}
p_s(\rho_1,\cdots,\rho_K)  = \max \ & \frac{1}{K}\tr \sum_{j=1}^K\rho_j \\
\text{s.t.} \ & \sum_{j=1}^K M_j = I, 0\le M_j\le I, \forall j=1,2,\cdots,K,\\
& M_j^{T_B}\ge 0, \forall j=1,2,\cdots,K.
\end{split}
\end{equation}
where we assume that each state in this collection has a equal probability of appearance.
As LOCC POVMs is a proper subset of PPT POVMs, the above SDP gives the upper bound to the optimal success probability of discriminating a collection of quantum states.

\textit{Optimized LOCC protocol.}
While the PPT bound serves as an upper bound to the success probability of LOCC discrimination, an optimal LOCC protocol may not necessarily reach the bound. Here, we present an LOCC protocol optimized by LOCCNet that achieves a success probability close to the PPT bound.

The only difference between this optimized protocol and the protocol for noiseless discrimination is that the rotation angle $\theta$ of Bob's $R_y$ gate is not fixed in the noisy case, as shown in Fig.~\ref{sfig:qsd_noisy_protocol}. Specifically, for Alice's measurement outcome being $0$,
\begin{align}
\theta = \pi-\arctan\left(\frac{2-\gamma}{\gamma}\right),
\end{align}
where $\gamma$ is the noise parameter of the AD channel $\mathcal{A}$ that $\proj{\Phi^-}$ goes through. For Alice's measurement outcome being $1$,
\begin{align}
\theta = -\pi+\arctan\left(\frac{2-\gamma}{\gamma}\right).
\end{align}

\begin{figure}[!hbtp]
\centerline{
\Qcircuit @C=1em @R=1.7em {
\lstick{A}  & \gate{R_y(\frac{\pi}{2})} & \meter \cwx[1] & \qw & \qw \\
\lstick{B}  & \qw & \gate{R_y(\theta)}& \meter & \qw \\
}
}
\caption{The optimized LOCC protocol for distinguishing between states $\Phi_0$ and $\Phi_1$. Depending on Alice's measurement outcome, the rotation angle $\theta$ of Bob's $R_y$ gate is either $\pi-\arctan(\alpha)$ or $\arctan(\alpha)-\pi$, where $\alpha=(2-\gamma)/\gamma$ and $\gamma$ is the noise parameter of the AD channel.}
\label{sfig:qsd_noisy_protocol}
\end{figure}
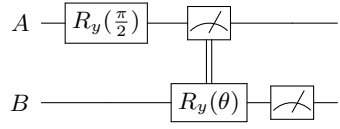

\begin{proposition}\label{sprop:qsd_psucc}
For states $\Phi_0 = \proj{\Phi^+}$ and $\Phi_1 = \mathcal{A}\otimes\mathcal{A}(\proj{\Phi^-})$, the optimized protocol illustrated in Fig.~\ref{sfig:qsd_noisy_protocol} discriminates between them with an average probability of
\begin{align}
p_\text{succ} = \frac{1}{2} + \frac{\sqrt{2-2\gamma+\gamma^2}}{2\sqrt{2}}.
\end{align}
\end{proposition}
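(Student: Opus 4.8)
The plan is to propagate the two candidate inputs through the three‑gate circuit of Fig.~\ref{sfig:qsd_noisy_protocol}, read off Bob's measurement statistics, and simplify. First I would record the inputs explicitly: $\Phi_0=\proj{\Phi^+}$, and expanding $\Phi_1=\mathcal{A}\otimes\mathcal{A}(\proj{\Phi^-})$ with the Kraus operators $E_0,E_1$ gives, with $|\chi\rangle:=|00\rangle-(1-\gamma)|11\rangle$,
\begin{align}
\Phi_1=\tfrac12\proj{\chi}+\tfrac{\gamma^2}{2}\proj{00}+\tfrac{\gamma(1-\gamma)}{2}\bigl(\proj{01}+\proj{10}\bigr).
\end{align}
Because the protocol declares ``$\Phi_0$'' on Bob's outcome $0$ and ``$\Phi_1$'' on outcome $1$, the averaged success probability is $p_\text{succ}=\tfrac12\bigl(P(0|\Phi_0)+P(1|\Phi_1)\bigr)$, which is just $1-L/2$ for the loss $L=P(1|\Phi_0)+P(0|\Phi_1)$ of the main text; so it suffices to evaluate these two conditional probabilities.

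Next I would trace the circuit. For an input $\rho$, after Alice's $R_y(\tfrac{\pi}{2})$ and her computational‑basis measurement with outcome $a$, Bob holds the subnormalized state $\tilde\sigma^\rho_a=\langle a|_A\,(R_y(\tfrac{\pi}{2})\otimes I)\,\rho\,(R_y(\tfrac{\pi}{2})^\dagger\otimes I)\,|a\rangle_A$; Bob then applies $V_a=R_y((-1)^a\theta)$ and measures, so $P(b|\rho)=\sum_{a}\langle b|V_a\tilde\sigma^\rho_a V_a^\dagger|b\rangle$. The key simplification is a $Z$‑symmetry on Bob's qubit: conjugation by $Z$ exchanges $\tilde\sigma^\rho_0\leftrightarrow\tilde\sigma^\rho_1$, sends $V_1\mapsto V_0$ (since $ZYZ=-Y$, hence $ZR_y(\theta)Z=R_y(-\theta)$) and fixes $\proj{b}$, so the $a=0$ and $a=1$ terms coincide and $P(b|\rho)=2\langle b|V_0\tilde\sigma^\rho_0 V_0^\dagger|b\rangle$. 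For $\Phi_0$ one has $(R_y(\tfrac{\pi}{2})\otimes I)|\Phi^+\rangle=\tfrac12\bigl[|0\rangle_A(|0\rangle-|1\rangle)_B+|1\rangle_A(|0\rangle+|1\rangle)_B\bigr]$, so $\tilde\sigma^{\Phi_0}_0=\tfrac12\proj{-}$ and $P(0|\Phi_0)=|\langle0|R_y(\theta)|-\rangle|^2=\tfrac12(1+\sin\theta)$. For $\Phi_1$, performing the same reduction on each term of the displayed decomposition and using $\gamma^2+\gamma(1-\gamma)=\gamma$ yields $\tilde\sigma^{\Phi_1}_0=\tfrac14\left(\begin{smallmatrix}1+\gamma&1-\gamma\\1-\gamma&1-\gamma\end{smallmatrix}\right)$ in the $\{|0\rangle,|1\rangle\}_B$ basis; then since $R_y(-\theta)|1\rangle=\sin(\tfrac{\theta}{2})|0\rangle+\cos(\tfrac{\theta}{2})|1\rangle$, a short expansion gives $\langle1|R_y(\theta)\,\tilde\sigma^{\Phi_1}_0\,R_y(-\theta)|1\rangle=\tfrac14\bigl(1-\gamma\cos\theta+(1-\gamma)\sin\theta\bigr)$, hence $P(1|\Phi_1)=\tfrac12\bigl(1-\gamma\cos\theta+(1-\gamma)\sin\theta\bigr)$.

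Combining, $p_\text{succ}=\tfrac12+\tfrac14\bigl[(2-\gamma)\sin\theta-\gamma\cos\theta\bigr]$, and it remains to substitute $\theta=\pi-\arctan\bigl(\tfrac{2-\gamma}{\gamma}\bigr)$. Writing $\phi=\arctan(\tfrac{2-\gamma}{\gamma})\in(0,\tfrac{\pi}{2})$, we get $\sin\theta=\sin\phi=(2-\gamma)/\sqrt{(2-\gamma)^2+\gamma^2}$ and $\cos\theta=-\cos\phi=-\gamma/\sqrt{(2-\gamma)^2+\gamma^2}$, so $(2-\gamma)\sin\theta-\gamma\cos\theta=\sqrt{(2-\gamma)^2+\gamma^2}$; since $(2-\gamma)^2+\gamma^2=2(2-2\gamma+\gamma^2)$ this equals $\sqrt{2}\,\sqrt{2-2\gamma+\gamma^2}$, giving $p_\text{succ}=\tfrac12+\sqrt{2-2\gamma+\gamma^2}/(2\sqrt{2})$, as claimed. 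As a check, this $\theta$ is precisely the angle maximizing $(2-\gamma)\sin\theta-\gamma\cos\theta$ (the phase aligned with the amplitude vector $(2-\gamma,-\gamma)$), which is why the protocol is optimal within this ansatz. The only real obstacle is the careful bookkeeping of the mixed state $\Phi_1$ through the non‑unitary measurement‑conditioning step; the $Z$‑symmetry collapses it to one $2\times2$ evaluation, after which the remainder is routine trigonometry.
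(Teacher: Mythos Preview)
Your proof is correct and follows the same direct-computation route as the paper: push each candidate state through Alice's rotation and measurement, then through Bob's conditional rotation, and read off $P(0|\Phi_0)$ and $P(1|\Phi_1)$, arriving at the identical intermediate expressions $\tfrac12(1+\sin\theta)$ and $\tfrac12(1-\gamma\cos\theta+(1-\gamma)\sin\theta)$ before the final trigonometric substitution. The one genuine addition is your $Z$-symmetry observation, which collapses the two Alice-outcome branches into a single $2\times 2$ evaluation; the paper instead computes both branches in full and only merges them at the end via $\theta_1=-\theta_0$. One point to tighten: the exchange $Z\tilde\sigma^\rho_0 Z=\tilde\sigma^\rho_1$ is not a general identity but depends on $(Z\otimes Z)\rho(Z\otimes Z)=\rho$; this holds for both $\Phi_0$ and $\Phi_1$ (each term in your decomposition of $\Phi_1$ is $Z\otimes Z$-invariant), but you should say so explicitly rather than assert the symmetry outright.
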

\begin{proof}
We consider this protocol in two cases.

\textit{Case 1. The input state is $\Phi_0 = \proj{\Phi^+}$.} For this case, after Alice applies $R_y(\pi/2)$ to her qubit, state $\ket{\Phi^+}$ becomes
\begin{align}
R_y\left(\frac{\pi}{2}\right)\otimes I \ket{\Phi^+} = \frac{1}{2}\ket{0}\otimes(\ket{0}-\ket{1}) + \frac{1}{2}\ket{1}\otimes(\ket{0}+\ket{1}).
\end{align}
Measuring the first qubit of this state, Alice has $50\%$ chance of getting $0$ with the normalized post-measurement state being $\ket{\psi_0} = 1/\sqrt{2}\ket{0}\otimes(\ket{0}-\ket{1})$ and $50\%$ chance of getting $1$ with state $\ket{\psi_1} = 1/\sqrt{2}\ket{1}\otimes(\ket{0}+\ket{1})$. The resulting states of the second qubit after Bob's operation are given below, where $\theta_0 = \pi-\arctan((2-\gamma)/\gamma)$ and $\theta_1 = -\pi+\arctan((2-\gamma)/\gamma)$.
\begin{align}
R_y(\theta_0)\frac{1}{\sqrt{2}}(\ket{0}-\ket{1}) &= 
\frac{1}{\sqrt{2}}\left(\left(\cos\frac{\theta_0}{2} + \sin\frac{\theta_0}{2}\right)\ket{0} - \left(\cos\frac{\theta_0}{2} - \sin\frac{\theta_0}{2}\right)\ket{1}\right); \\
R_y(\theta_1)\frac{1}{\sqrt{2}}(\ket{0}+\ket{1}) 
&= \frac{1}{\sqrt{2}}\left(\left(\cos\frac{\theta_1}{2} - \sin\frac{\theta_1}{2}\right)\ket{0} + \left(\cos\frac{\theta_1}{2} + \sin\frac{\theta_1}{2}\right)\ket{1}\right).
\end{align}
Then, the probability of Bob's measurement outcome being $0$ given $\Phi_0$ as the input state is
\begin{align}
P(0|\Phi_0) &= \frac{1}{2}\left|\frac{1}{\sqrt{2}}\left(\cos\frac{\theta_0}{2} + \sin\frac{\theta_0}{2}\right)\right|^2 + \frac{1}{2}\left|\frac{1}{\sqrt{2}}\left(\cos\frac{\theta_1}{2} - \sin\frac{\theta_1}{2}\right)\right|^2 
= \frac{1}{2} + \frac{\sin\theta_0-\sin\theta_1}{4}.
\end{align}
Since $\theta_1 = -\theta_0$, we have $\sin\theta_1 = -\sin\theta_0$ and
\begin{align}
P(0|\Phi_0) &= \frac{1}{2} + \frac{\sin\theta_0+\sin\theta_0}{4} = \frac{1+\sin\theta_0}{2}.
\end{align}
Let $\alpha\equiv(2-\gamma)/\gamma$. Then, $\theta_0 = \pi-\arctan\alpha$ and
\begin{align}
P(0|\Phi_0) &= \frac{1+\sin(\pi-\arctan\alpha)}{2} 
= \frac{1+\sin(\arctan\alpha)}{2} 
= \frac{1}{2}+\frac{\alpha}{2\sqrt{1+\alpha^2}}.
\end{align}

\textit{Case 2. The input state is $\Phi_1 = \mathcal{A}\otimes\mathcal{A}(\proj{\Phi^-})$.} If the noise parameter of the AD channel $\mathcal{A}$ is $\gamma$, then the input state in matrix form is
\begin{align}
\Phi_0 = \frac{1}{2}
\begin{pmatrix}
1+\gamma^2 & 0 & 0 & \gamma-1 \\
0 & \gamma-\gamma^2 & 0 & 0 \\
0 & 0 & \gamma-\gamma^2 & 0 \\
\gamma-1 & 0 & 0 & (1-\gamma)^2
\end{pmatrix}.
\end{align}
After Alice applies $R_y(\pi/2)$ to her qubit, the input state becomes
\begin{align}
R_y\left(\frac{\pi}{2}\right)\otimes I \Phi_0 R_y^\dagger\left(\frac{\pi}{2}\right)\otimes I =\frac{1}{4}
\begin{pmatrix}
1+\gamma & 1-\gamma & 1-\gamma+2\gamma^2 & \gamma-1 \\
1-\gamma & 1-\gamma & 1-\gamma & -1+3\gamma-2\gamma^2 \\
1-\gamma+2\gamma^2 & 1-\gamma & 1+\gamma & \gamma-1 \\
\gamma-1 & -1+3\gamma-2\gamma^2 & \gamma-1 & 1-\gamma
\end{pmatrix}.
\end{align}
From this state, we can see that when Alice measures her qubit, she has $50\%$ chance of getting $0$ with this state collapsing to
\begin{align}
\rho_0 = \proj{0}\otimes\frac{1}{2}
\begin{pmatrix}
1+\gamma & 1-\gamma \\
1-\gamma & 1-\gamma
\end{pmatrix}.
\end{align}
If Alice gets $1$ instead, this state collapses to
\begin{align}
\rho_1 = \proj{1}\otimes\frac{1}{2}
\begin{pmatrix}
1+\gamma & \gamma-1 \\
\gamma-1 & 1-\gamma
\end{pmatrix}.
\end{align}
The resulting states of the second qubit after Bob's operation are given below, where $\theta_0 = \pi-\arctan((2-\gamma)/\gamma)$ and $\theta_1 = -\pi+\arctan((2-\gamma)/\gamma)$.
\begin{align}
R_y(\theta_0)\frac{1}{2}
\begin{pmatrix}
1+\gamma & 1-\gamma \\
1-\gamma & 1-\gamma
\end{pmatrix}
R_y^\dagger(\theta_0)
&= \frac{1}{2}
\begin{pmatrix}
1+\gamma\cos\theta_0 + (\gamma-1)\sin\theta_0 & (1-\gamma)\cos\theta_0 + \gamma\sin\theta_0 \\
(1-\gamma)\cos\theta_0 + \gamma\sin\theta_0 & 1-\gamma\cos\theta_0 + (1-\gamma)\sin\theta_0
\end{pmatrix}; \\
R_y(\theta_1)\frac{1}{2}
\begin{pmatrix}
1+\gamma & \gamma-1 \\
\gamma-1 & 1-\gamma
\end{pmatrix}
R_y^\dagger(\theta_1)
&= \frac{1}{2}
\begin{pmatrix}
1+\gamma\cos\theta_1 + (1-\gamma)\sin\theta_1 & (\gamma-1)\cos\theta_1 + \gamma\sin\theta_1 \\
(\gamma-1)\cos\theta_1 + \gamma\sin\theta_1 & 1-\gamma\cos\theta_1 + (\gamma-1)\sin\theta_1
\end{pmatrix}.
\end{align}
Then, the probability of Bob's measurement outcome being $1$ given $\Phi_1$ as the input state is
\begin{align}
P(1|\Phi_1) &= \frac{1}{2}\cdot\frac{1-\gamma\cos\theta_0 + (1-\gamma)\sin\theta_0}{2} + \frac{1}{2}\cdot\frac{1-\gamma\cos\theta_1 + (\gamma-1)\sin\theta_1}{2} \\
&= \frac{2 - \gamma(\cos\theta_0+\cos\theta_1) + (1-\gamma)(\sin\theta_0-\sin\theta_1)}{4}.
\end{align}
Since $\theta_0 = \theta_1$, we have $\sin\theta_1=-\sin\theta_0$, $\cos\theta_1 = \cos\theta_0$, and thus
\begin{align}
P(1|\Phi_1) &= \frac{2 - \gamma(\cos\theta_0+\cos\theta_0) + (1-\gamma)(\sin\theta_0+\sin\theta_0)}{4} \\
&= \frac{1 - \gamma\cos\theta_0 + (1-\gamma)\sin\theta_0}{2}.
\end{align}
Let $\alpha\equiv(2-\gamma)/\gamma$. Then $\theta_0 = \pi-\arctan\alpha$ and
\begin{align}
P(1|\Phi_1) 
= \frac{1}{2} + \frac{\gamma+(1-\gamma)\alpha}{2\sqrt{1+\alpha^2}}.
\end{align}

Combining these two cases, we can obtain this optimized protocol's average probability of success as
\begin{align}
p_\text{succ} &= \frac{1}{2}P(0|\Phi_0) + \frac{1}{2}P(1|\Phi_1) = \frac{1}{2}\left(\frac{1}{2}+\frac{\alpha}{2\sqrt{1+\alpha^2}} + \frac{1}{2} + \frac{\gamma+(1-\gamma)\alpha}{2\sqrt{1+\alpha^2}}\right) \\
&= \frac{1}{2}\left(1+\frac{\gamma+(2-\gamma)\alpha}{2\sqrt{1+\alpha^2}}\right) 
= \frac{1}{2}+\frac{\sqrt{2-2\gamma+\gamma^2}}{2\sqrt{2}}.
\end{align}
\end{proof}


\begin{thebibliography}{10}
\expandafter\ifx\csname url\endcsname\relax
  \def\url#1{\texttt{#1}}\fi
\expandafter\ifx\csname urlprefix\endcsname\relax\def\urlprefix{URL }\fi
\providecommand{\bibinfo}[2]{#2}
\providecommand{\eprint}[2][]{\url{#2}}

\bibitem{Farhi2014}
\bibinfo{author}{Farhi, E.}, \bibinfo{author}{Goldstone, J.} \&
  \bibinfo{author}{Gutmann, S.}
\newblock \bibinfo{title}{{A Quantum Approximate Optimization Algorithm}}.
\newblock Preprint at \url{http://arxiv.org/abs/1411.4028} (\bibinfo{year}{2014}).

\bibitem{Harrigan2020}
\bibinfo{author}{Harrigan, M.~P.} \emph{et~al.}
\newblock \bibinfo{title}{{Quantum approximate optimization of non-planar graph
  problems on a planar superconducting processor}}.
\newblock \emph{\bibinfo{journal}{Nat. Phys.}}
  \textbf{\bibinfo{volume}{17}}, \bibinfo{pages}{332--336}
  (\bibinfo{year}{2021}).

\bibitem{McArdle2018a}
\bibinfo{author}{McArdle, S.}, \bibinfo{author}{Endo, S.},
  \bibinfo{author}{Aspuru-Guzik, A.}, \bibinfo{author}{Benjamin, S.} \&
  \bibinfo{author}{Yuan, X.}
\newblock \bibinfo{title}{{Quantum computational chemistry}}.
\newblock \emph{\bibinfo{journal}{Rev. Mod. Phys.}}
  \textbf{\bibinfo{volume}{92}}, \bibinfo{pages}{015003}
  (\bibinfo{year}{2020}).

\bibitem{Arute2020}
\bibinfo{author}{Arute, F.} \emph{et~al.}
\newblock \bibinfo{title}{{Hartree-Fock on a superconducting qubit quantum
  computer}}.
\newblock \emph{\bibinfo{journal}{Science}} \textbf{\bibinfo{volume}{369}},
  \bibinfo{pages}{1084--1089} (\bibinfo{year}{2020}).

\bibitem{Bennett1984}
\bibinfo{author}{Bennett, C.~H.} \& \bibinfo{author}{Brassard, G.}
\newblock \bibinfo{title}{{Quantum cryptography: Public key distribution and
  coin tossing}}.
\newblock In \emph{\bibinfo{booktitle}{International Conference on Computers,
  Systems {\&} Signal Processing, Bangalore, India, Dec 9-12, 1984}},
  \bibinfo{pages}{175--179} (\bibinfo{year}{1984}).

\bibitem{Ekert1991}
\bibinfo{author}{Ekert, A.~K.}
\newblock \bibinfo{title}{{Quantum cryptography based on Bell's theorem}}.
\newblock \emph{\bibinfo{journal}{Phys. Rev. Lett.}}
  \textbf{\bibinfo{volume}{67}}, \bibinfo{pages}{661--663}
  (\bibinfo{year}{1991}).

\bibitem{Biamonte2017a}
\bibinfo{author}{Biamonte, J.} \emph{et~al.}
\newblock \bibinfo{title}{{Quantum machine learning}}.
\newblock \emph{\bibinfo{journal}{Nature}} \textbf{\bibinfo{volume}{549}},
  \bibinfo{pages}{195--202} (\bibinfo{year}{2017}).

\bibitem{bennett1993teleporting}
\bibinfo{author}{Bennett, C.~H.} \emph{et~al.}
\newblock \bibinfo{title}{Teleporting an unknown quantum state via dual
  classical and einstein-podolsky-rosen channels}.
\newblock \emph{\bibinfo{journal}{Phys. Rev. Lett.}}
  \textbf{\bibinfo{volume}{70}}, \bibinfo{pages}{1895} (\bibinfo{year}{1993}).

\bibitem{bennett1992communication}
\bibinfo{author}{Bennett, C.~H.} \& \bibinfo{author}{Wiesner, S.~J.}
\newblock \bibinfo{title}{Communication via one-and two-particle operators on
  einstein-podolsky-rosen states}.
\newblock \emph{\bibinfo{journal}{Phys. Rev. Lett.}}
  \textbf{\bibinfo{volume}{69}}, \bibinfo{pages}{2881} (\bibinfo{year}{1992}).

\bibitem{Plenio2007}
\bibinfo{author}{Plenio, M.~B.} \& \bibinfo{author}{Virmani, S.~S.}
\newblock \bibinfo{title}{{An introduction to entanglement measures}}.
\newblock \emph{\bibinfo{journal}{Quantum Information and Computation}}
  \textbf{\bibinfo{volume}{7}}, \bibinfo{pages}{1--51} (\bibinfo{year}{2007}).

\bibitem{horodecki2009quantum}
\bibinfo{author}{Horodecki, R.}, \bibinfo{author}{Horodecki, P.},
  \bibinfo{author}{Horodecki, M.} \& \bibinfo{author}{Horodecki, K.}
\newblock \bibinfo{title}{Quantum entanglement}.
\newblock \emph{\bibinfo{journal}{Rev. Mod. Phys.}}
  \textbf{\bibinfo{volume}{81}}, \bibinfo{pages}{865} (\bibinfo{year}{2009}).

\bibitem{preskill2018quantum}
\bibinfo{author}{Preskill, J.}
\newblock \bibinfo{title}{Quantum {C}omputing in the {NISQ} era and beyond}.
\newblock \emph{\bibinfo{journal}{Quantum}} \textbf{\bibinfo{volume}{2}},
  \bibinfo{pages}{79} (\bibinfo{year}{2018}).

\bibitem{Chitambar2014}
\bibinfo{author}{Chitambar, E.}, \bibinfo{author}{Leung, D.},
  \bibinfo{author}{Man{\v{c}}inska, L.}, \bibinfo{author}{Ozols, M.} \&
  \bibinfo{author}{Winter, A.}
\newblock \bibinfo{title}{{Everything You Always Wanted to Know About LOCC (But
  Were Afraid to Ask)}}.
\newblock \emph{\bibinfo{journal}{Commun. Math. Phys.}}
  \textbf{\bibinfo{volume}{328}}, \bibinfo{pages}{303--326}
  (\bibinfo{year}{2014}).

\bibitem{LeCun2015}
\bibinfo{author}{LeCun, Y.}, \bibinfo{author}{Bengio, Y.} \&
  \bibinfo{author}{Hinton, G.}
\newblock \bibinfo{title}{{Deep learning}}.
\newblock \emph{\bibinfo{journal}{Nature}} \textbf{\bibinfo{volume}{521}},
  \bibinfo{pages}{436--444} (\bibinfo{year}{2015}).

\bibitem{Silver2016}
\bibinfo{author}{Silver, D.} \emph{et~al.}
\newblock \bibinfo{title}{{Mastering the game of Go with deep neural networks
  and tree search}}.
\newblock \emph{\bibinfo{journal}{Nature}} \textbf{\bibinfo{volume}{529}},
  \bibinfo{pages}{484--489} (\bibinfo{year}{2016}).

\bibitem{Jumper2020}
\bibinfo{author}{Jumper, J.} \emph{et~al.}
\newblock \bibinfo{title}{{Highly accurate protein structure prediction with
  AlphaFold}}.
\newblock \emph{\bibinfo{journal}{Nature}}
  \textbf{\bibinfo{volume}{596}}, \bibinfo{pages}{583–-589} (\bibinfo{year}{2021}).

\bibitem{Mavadia2017}
\bibinfo{author}{Mavadia, S.}, \bibinfo{author}{Frey, V.},
  \bibinfo{author}{Sastrawan, J.}, \bibinfo{author}{Dona, S.} \&
  \bibinfo{author}{Biercuk, M.~J.}
\newblock \bibinfo{title}{{Prediction and real-time compensation of qubit
  decoherence via machine learning}}.
\newblock \emph{\bibinfo{journal}{Nat. Commun.}}
  \textbf{\bibinfo{volume}{8}}, \bibinfo{pages}{14106} (\bibinfo{year}{2017}).

\bibitem{Wan2017}
\bibinfo{author}{Wan, K.~H.}, \bibinfo{author}{Dahlsten, O.},
  \bibinfo{author}{Kristj{\'{a}}nsson, H.}, \bibinfo{author}{Gardner, R.} \&
  \bibinfo{author}{Kim, M.~S.}
\newblock \bibinfo{title}{{Quantum generalisation of feedforward neural
  networks}}.
\newblock \emph{\bibinfo{journal}{npj Quantum Inf.}}
  \textbf{\bibinfo{volume}{3}}, \bibinfo{pages}{36} (\bibinfo{year}{2017}).

\bibitem{Lu2017b}
\bibinfo{author}{Lu, D.} \emph{et~al.}
\newblock \bibinfo{title}{{Enhancing quantum control by bootstrapping a quantum
  processor of 12 qubits}}.
\newblock \emph{\bibinfo{journal}{npj Quantum Inf.}}
  \textbf{\bibinfo{volume}{3}}, \bibinfo{pages}{1--7} (\bibinfo{year}{2017}).

\bibitem{Niu2019}
\bibinfo{author}{Niu, M.~Y.}, \bibinfo{author}{Boixo, S.},
  \bibinfo{author}{Smelyanskiy, V.~N.} \& \bibinfo{author}{Neven, H.}
\newblock \bibinfo{title}{{Universal quantum control through deep reinforcement
  learning}}.
\newblock \emph{\bibinfo{journal}{npj Quantum Inf.}}
  \textbf{\bibinfo{volume}{5}}, \bibinfo{pages}{33} (\bibinfo{year}{2019}).

\bibitem{Wallnofer2020}
\bibinfo{author}{Walln{\"{o}}fer, J.}, \bibinfo{author}{Melnikov, A.~A.},
  \bibinfo{author}{D{\"{u}}r, W.} \& \bibinfo{author}{Briegel, H.~J.}
\newblock \bibinfo{title}{{Machine Learning for Long-Distance Quantum
  Communication}}.
\newblock \emph{\bibinfo{journal}{PRX Quantum}} \textbf{\bibinfo{volume}{1}},
  \bibinfo{pages}{010301} (\bibinfo{year}{2020}).

\bibitem{Bausch2018}
\bibinfo{author}{Bausch, J.} \& \bibinfo{author}{Leditzky, F.}
\newblock \bibinfo{title}{{Quantum codes from neural networks}}.
\newblock \emph{\bibinfo{journal}{New J. Phys.}}
  \textbf{\bibinfo{volume}{22}}, \bibinfo{pages}{023005}
  (\bibinfo{year}{2020}).

\bibitem{Benedetti2019a}
\bibinfo{author}{Benedetti, M.}, \bibinfo{author}{Lloyd, E.},
  \bibinfo{author}{Sack, S.} \& \bibinfo{author}{Fiorentini, M.}
\newblock \bibinfo{title}{{Parameterized quantum circuits as machine learning
  models}}.
\newblock \emph{\bibinfo{journal}{Quantum Sci. Technol.}}
  \textbf{\bibinfo{volume}{4}}, \bibinfo{pages}{043001} (\bibinfo{year}{2019}).

\bibitem{Bennett1996}
\bibinfo{author}{Bennett, C.~H.} \emph{et~al.}
\newblock \bibinfo{title}{{Purification of Noisy Entanglement and Faithful
  Teleportation via Noisy Channels}}.
\newblock \emph{\bibinfo{journal}{Phys. Rev. Lett.}}
  \textbf{\bibinfo{volume}{76}}, \bibinfo{pages}{722--725}
  (\bibinfo{year}{1996}).

\bibitem{deutsch1996quantum}
\bibinfo{author}{Deutsch, D.} \emph{et~al.}
\newblock \bibinfo{title}{Quantum privacy amplification and the security of
  quantum cryptography over noisy channels}.
\newblock \emph{\bibinfo{journal}{Phys. Rev. Lett.}}
  \textbf{\bibinfo{volume}{77}}, \bibinfo{pages}{2818} (\bibinfo{year}{1996}).

\bibitem{Murao1998}
\bibinfo{author}{Murao, M.}, \bibinfo{author}{Plenio, M.~B.},
  \bibinfo{author}{Popescu, S.}, \bibinfo{author}{Vedral, V.} \&
  \bibinfo{author}{Knight, P.~L.}
\newblock \bibinfo{title}{{Multiparticle entanglement purification protocols}}.
\newblock \emph{\bibinfo{journal}{Phys. Rev. A}}
  \textbf{\bibinfo{volume}{57}}, \bibinfo{pages}{R4075--R4078}
  (\bibinfo{year}{1998}).

\bibitem{Dur2007}
\bibinfo{author}{D{\"{u}}r, W.} \& \bibinfo{author}{Briegel, H.~J.}
\newblock \bibinfo{title}{{Entanglement purification and quantum error
  correction}}.
\newblock \emph{\bibinfo{journal}{Rep. Prog. Phys.}}
  \textbf{\bibinfo{volume}{70}}, \bibinfo{pages}{1381--1424}
  (\bibinfo{year}{2007}).

\bibitem{Pan2003}
\bibinfo{author}{Pan, J.-W.}, \bibinfo{author}{Gasparoni, S.},
  \bibinfo{author}{Ursin, R.}, \bibinfo{author}{Weihs, G.} \&
  \bibinfo{author}{Zeilinger, A.}
\newblock \bibinfo{title}{{Experimental entanglement purification of arbitrary
  unknown states}}.
\newblock \emph{\bibinfo{journal}{Nature}} \textbf{\bibinfo{volume}{423}},
  \bibinfo{pages}{417--422} (\bibinfo{year}{2003}).

\bibitem{Devetak2003a}
\bibinfo{author}{Devetak, I.} \& \bibinfo{author}{Winter, A.}
\newblock \bibinfo{title}{{Distillation of secret key and entanglement from
  quantum states}}.
\newblock \emph{\bibinfo{journal}{Proc. R. Soc. A.}}
  \textbf{\bibinfo{volume}{461}}, \bibinfo{pages}{207--235}
  (\bibinfo{year}{2005}).

\bibitem{Bennett1999b}
\bibinfo{author}{Bennett, C.~H.} \emph{et~al.}
\newblock \bibinfo{title}{{Quantum nonlocality without entanglement}}.
\newblock \emph{\bibinfo{journal}{Phys. Rev. A}}
  \textbf{\bibinfo{volume}{59}}, \bibinfo{pages}{1070--1091}
  (\bibinfo{year}{1999}).

\bibitem{Walgate2000}
\bibinfo{author}{Walgate, J.}, \bibinfo{author}{Short, A.~J.},
  \bibinfo{author}{Hardy, L.} \& \bibinfo{author}{Vedral, V.}
\newblock \bibinfo{title}{{Local distinguishability of multipartite orthogonal
  quantum states}}.
\newblock \emph{\bibinfo{journal}{Phys. Rev. Lett.}}
  \textbf{\bibinfo{volume}{85}}, \bibinfo{pages}{4972} (\bibinfo{year}{2000}).

\bibitem{Fan2004a}
\bibinfo{author}{Fan, H.}
\newblock \bibinfo{title}{{Distinguishability and indistinguishability by local
  operations and classical communication}}.
\newblock \emph{\bibinfo{journal}{Phys. Rev. Lett.}}
  \textbf{\bibinfo{volume}{92}}, \bibinfo{pages}{177905}
  (\bibinfo{year}{2004}).

\bibitem{Hayashi2006}
\bibinfo{author}{Hayashi, M.}, \bibinfo{author}{Markham, D.},
  \bibinfo{author}{Murao, M.}, \bibinfo{author}{Owari, M.} \&
  \bibinfo{author}{Virmani, S.}
\newblock \bibinfo{title}{{Bounds on multipartite entangled orthogonal state
  discrimination using local operations and classical communication}}.
\newblock \emph{\bibinfo{journal}{Phys. Rev. Lett.}}
  \textbf{\bibinfo{volume}{96}}, \bibinfo{pages}{40501} (\bibinfo{year}{2006}).

\bibitem{Ghosh2004}
\bibinfo{author}{Ghosh, S.}, \bibinfo{author}{Kar, G.}, \bibinfo{author}{Roy,
  A.} \& \bibinfo{author}{Sarkar, D.}
\newblock \bibinfo{title}{{Distinguishability of maximally entangled states}}.
\newblock \emph{\bibinfo{journal}{Phys. Rev. A}}
  \textbf{\bibinfo{volume}{70}}, \bibinfo{pages}{22304} (\bibinfo{year}{2004}).

\bibitem{Nathanson2005}
\bibinfo{author}{Nathanson, M.}
\newblock \bibinfo{title}{{Distinguishing bipartitite orthogonal states using
  LOCC: Best and worst cases}}.
\newblock \emph{\bibinfo{journal}{J. Math. Phys.}}
  \textbf{\bibinfo{volume}{46}}, \bibinfo{pages}{062103} (\bibinfo{year}{2005}).

\bibitem{Duan2007a}
\bibinfo{author}{Duan, R.}, \bibinfo{author}{Feng, Y.}, \bibinfo{author}{Ji,
  Z.} \& \bibinfo{author}{Ying, M.}
\newblock \bibinfo{title}{{Distinguishing arbitrary multipartite basis
  unambiguously using local operations and classical communication}}.
\newblock \emph{\bibinfo{journal}{Phys. Rev. Lett.}}
  \textbf{\bibinfo{volume}{98}}, \bibinfo{pages}{230502}
  (\bibinfo{year}{2007}).

\bibitem{Chitambar2014b}
\bibinfo{author}{Chitambar, E.}, \bibinfo{author}{Duan, R.} \&
  \bibinfo{author}{Hsieh, M.-H.}
\newblock \bibinfo{title}{{When Do Local Operations and Classical Communication
  Suffice for Two-Qubit State Discrimination?}}
\newblock \emph{\bibinfo{journal}{IEEE Trans. Inf. Theory}}
  \textbf{\bibinfo{volume}{60}}, \bibinfo{pages}{1549--1561}
  (\bibinfo{year}{2014}).

\bibitem{Duan2009d}
\bibinfo{author}{Duan, R.}, \bibinfo{author}{Feng, Y.}, \bibinfo{author}{Xin,
  Y.} \& \bibinfo{author}{Ying, M.}
\newblock \bibinfo{title}{{Distinguishability of quantum states by separable
  operations}}.
\newblock \emph{\bibinfo{journal}{IEEE Trans. Inf. Theory}}
  \textbf{\bibinfo{volume}{55}}, \bibinfo{pages}{1320--1330}
  (\bibinfo{year}{2009}).

\bibitem{Childs2013}
\bibinfo{author}{Childs, A.~M.}, \bibinfo{author}{Leung, D.},
  \bibinfo{author}{Man{\v{c}}inska, L.} \& \bibinfo{author}{Ozols, M.}
\newblock \bibinfo{title}{{A framework for bounding nonlocality of state
  discrimination}}.
\newblock \emph{\bibinfo{journal}{Commun. Math. Phys.}}
  \textbf{\bibinfo{volume}{323}}, \bibinfo{pages}{1121--1153}
  (\bibinfo{year}{2013}).

\bibitem{Li2017}
\bibinfo{author}{Li, Y.}, \bibinfo{author}{Wang, X.} \& \bibinfo{author}{Duan,
  R.}
\newblock \bibinfo{title}{{Indistinguishability of bipartite states by
  positive-partial-transpose operations in the many-copy scenario}}.
\newblock \emph{\bibinfo{journal}{Phys. Rev. A}}
  \textbf{\bibinfo{volume}{95}}, \bibinfo{pages}{052346}
  (\bibinfo{year}{2017}).

\bibitem{Bandyopadhyay2014a}
\bibinfo{author}{Bandyopadhyay, S.} \emph{et~al.}
\newblock \bibinfo{title}{{Limitations on separable measurements by convex
  optimization}}.
\newblock \emph{\bibinfo{journal}{IEEE Trans. Inf. Theory}}
  \textbf{\bibinfo{volume}{61}}, \bibinfo{pages}{3593--3604}
  (\bibinfo{year}{2014}).

\bibitem{Bennett1996c}
\bibinfo{author}{Bennett, C.~H.}, \bibinfo{author}{DiVincenzo, D.~P.},
  \bibinfo{author}{Smolin, J.~A.} \& \bibinfo{author}{Wootters, W.~K.}
\newblock \bibinfo{title}{{Mixed-state entanglement and quantum error
  correction}}.
\newblock \emph{\bibinfo{journal}{Phys. Rev. A}}
  \textbf{\bibinfo{volume}{54}}, \bibinfo{pages}{3824--3851}
  (\bibinfo{year}{1996}).

\bibitem{Bennett2014}
\bibinfo{author}{Bennett, C.~H.}, \bibinfo{author}{Devetak, I.},
  \bibinfo{author}{Harrow, A.~W.}, \bibinfo{author}{Shor, P.~W.} \&
  \bibinfo{author}{Winter, A.}
\newblock \bibinfo{title}{{The Quantum Reverse Shannon Theorem and Resource
  Tradeoffs for Simulating Quantum Channels}}.
\newblock \emph{\bibinfo{journal}{IEEE Trans. Inf. Theory}}
  \textbf{\bibinfo{volume}{60}}, \bibinfo{pages}{2926--2959}
  (\bibinfo{year}{2014}).

\bibitem{Berta2013}
\bibinfo{author}{Berta, M.}, \bibinfo{author}{Brandao, F. G. S.~L.},
  \bibinfo{author}{Christandl, M.} \& \bibinfo{author}{Wehner, S.}
\newblock \bibinfo{title}{{Entanglement Cost of Quantum Channels}}.
\newblock \emph{\bibinfo{journal}{IEEE Trans. Inf. Theory}}
  \textbf{\bibinfo{volume}{59}}, \bibinfo{pages}{6779--6795}
  (\bibinfo{year}{2013}).

\bibitem{Pirandola2015b}
\bibinfo{author}{Pirandola, S.}, \bibinfo{author}{Laurenza, R.},
  \bibinfo{author}{Ottaviani, C.} \& \bibinfo{author}{Banchi, L.}
\newblock \bibinfo{title}{{Fundamental limits of repeaterless quantum
  communications}}.
\newblock \emph{\bibinfo{journal}{Nat. Commun.}}
  \textbf{\bibinfo{volume}{8}}, \bibinfo{pages}{15043} (\bibinfo{year}{2017}).

\bibitem{Wilde2018}
\bibinfo{author}{Wilde, M.~M.}
\newblock \bibinfo{title}{{Entanglement cost and quantum channel simulation}}.
\newblock \emph{\bibinfo{journal}{Phys. Rev. A}}
  \textbf{\bibinfo{volume}{98}}, \bibinfo{pages}{042338}
  (\bibinfo{year}{2018}).

\bibitem{WW18}
\bibinfo{author}{Wang, X.} \& \bibinfo{author}{Wilde, M.~M.}
\newblock \bibinfo{title}{{Exact entanglement cost of quantum states and
  channels under PPT-preserving operations}}.
\newblock Preprint at \url{http://arxiv.org/abs/1809.09592} (\bibinfo{year}{2018}).

\bibitem{Fang2018}
\bibinfo{author}{Fang, K.}, \bibinfo{author}{Wang, X.},
  \bibinfo{author}{Tomamichel, M.} \& \bibinfo{author}{Berta, M.}
\newblock \bibinfo{title}{{Quantum Channel Simulation and the Channel's Smooth
  Max-Information}}.
\newblock \emph{\bibinfo{journal}{IEEE Trans. Inf. Theory}}
  \textbf{\bibinfo{volume}{66}}, \bibinfo{pages}{2129--2140}
  (\bibinfo{year}{2020}).

\bibitem{peruzzo2014variational}
\bibinfo{author}{Peruzzo, A.} \emph{et~al.}
\newblock \bibinfo{title}{A variational eigenvalue solver on a photonic quantum
  processor}.
\newblock \emph{\bibinfo{journal}{Nat. Commun.}}
  \textbf{\bibinfo{volume}{5}}, \bibinfo{pages}{4213} (\bibinfo{year}{2014}).

\bibitem{Johnson2017}
\bibinfo{author}{Johnson, P.~D.}, \bibinfo{author}{Romero, J.},
  \bibinfo{author}{Olson, J.}, \bibinfo{author}{Cao, Y.} \&
  \bibinfo{author}{Aspuru-Guzik, A.}
\newblock \bibinfo{title}{{QVECTOR: an algorithm for device-tailored quantum
  error correction}}.
\newblock Preprint at \url{http://arxiv.org/abs/1711.02249} (\bibinfo{year}{2017}).

\bibitem{Cerezo2020}
\bibinfo{author}{Cerezo, M.} \emph{et~al.}
\newblock \bibinfo{title}{{Variational quantum algorithms}}.
\newblock \emph{\bibinfo{journal}{Nat. Rev. Phys.}}
  \textbf{\bibinfo{volume}{3}}, \bibinfo{pages}{625--644}
  (\bibinfo{year}{2021}).

\bibitem{Bharti2021}
\bibinfo{author}{Bharti, K.} \emph{et~al.}
\newblock \bibinfo{title}{{Noisy intermediate-scale quantum (NISQ)
  algorithms}}.
\newblock Preprint at \url{http://arxiv.org/abs/2101.08448} (\bibinfo{year}{2021}).

\bibitem{Endo2020}
\bibinfo{author}{Endo, S.}, \bibinfo{author}{Cai, Z.},
  \bibinfo{author}{Benjamin, S.~C.} \& \bibinfo{author}{Yuan, X.}
\newblock \bibinfo{title}{{Hybrid Quantum-Classical Algorithms and Quantum
  Error Mitigation}}.
\newblock \emph{\bibinfo{journal}{J. Phys. Soc. Jpn.}}
  \textbf{\bibinfo{volume}{90}}, \bibinfo{pages}{032001}
  (\bibinfo{year}{2021}).

\bibitem{Chen2020a}
\bibinfo{author}{Chen, R.}, \bibinfo{author}{Song, Z.}, \bibinfo{author}{Zhao,
  X.} \& \bibinfo{author}{Wang, X.}
\newblock \bibinfo{title}{{Variational Quantum Algorithms for Trace Distance
  and Fidelity Estimation}}.
\newblock Preprint at \url{http://arxiv.org/abs/2012.05768} (\bibinfo{year}{2020}).

\bibitem{Cerezo2019}
\bibinfo{author}{Cerezo, M.}, \bibinfo{author}{Poremba, A.},
  \bibinfo{author}{Cincio, L.} \& \bibinfo{author}{Coles, P.~J.}
\newblock \bibinfo{title}{{Variational Quantum Fidelity Estimation}}.
\newblock \emph{\bibinfo{journal}{Quantum}} \textbf{\bibinfo{volume}{4}},
  \bibinfo{pages}{248} (\bibinfo{year}{2020}).

\bibitem{Romero2017}
\bibinfo{author}{Romero, J.}, \bibinfo{author}{Olson, J.~P.} \&
  \bibinfo{author}{Aspuru-Guzik, A.}
\newblock \bibinfo{title}{{Quantum autoencoders for efficient compression of
  quantum data}}.
\newblock \emph{\bibinfo{journal}{Quantum Sci. Technol.}}
  \textbf{\bibinfo{volume}{2}}, \bibinfo{pages}{045001} (\bibinfo{year}{2017}).

\bibitem{Cao2020}
\bibinfo{author}{Cao, C.} \& \bibinfo{author}{Wang, X.}
\newblock \bibinfo{title}{{Noise-Assisted Quantum Autoencoder}}.
\newblock \emph{\bibinfo{journal}{Phys. Rev. Applied}}
  \textbf{\bibinfo{volume}{15}}, \bibinfo{pages}{054012}
  (\bibinfo{year}{2021}).

\bibitem{Paddlequantum}
\bibinfo{title}{{Paddle Quantum}}.
\newblock \url{https://github.com/PaddlePaddle/Quantum}.
  (\bibinfo{year}{2020}).

\bibitem{Ma2019}
\bibinfo{author}{Ma, Y.}, \bibinfo{author}{Yu, D.}, \bibinfo{author}{Wu, T.} \&
  \bibinfo{author}{Wang, H.}
\newblock \bibinfo{title}{{PaddlePaddle: An Open-Source Deep Learning Platform
  from Industrial Practice}}.
\newblock \emph{\bibinfo{journal}{Frontiers of Data and Computing}}
  \textbf{\bibinfo{volume}{1}}, \bibinfo{pages}{105--115}
  (\bibinfo{year}{2019}).

\bibitem{Paddle}
\bibinfo{title}{{PArallel Distributed Deep LEarning: Machine Learning Framework
  from Industrial Practice}}.
\newblock \url{https://github.com/paddlepaddle/paddle} (\bibinfo{year}{2016}).

\bibitem{fujii2009entanglement}
\bibinfo{author}{Fujii, K.} \& \bibinfo{author}{Yamamoto, K.}
\newblock \bibinfo{title}{Entanglement purification with double selection}.
\newblock \emph{\bibinfo{journal}{Phys. Rev. A}}
  \textbf{\bibinfo{volume}{80}}, \bibinfo{pages}{042308}
  (\bibinfo{year}{2009}).

\bibitem{kalb2017entanglement}
\bibinfo{author}{Kalb, N.} \emph{et~al.}
\newblock \bibinfo{title}{Entanglement distillation between solid-state quantum
  network nodes}.
\newblock \emph{\bibinfo{journal}{Science}} \textbf{\bibinfo{volume}{356}},
  \bibinfo{pages}{928--932} (\bibinfo{year}{2017}).

\bibitem{rozpkedek2018optimizing}
\bibinfo{author}{Rozpedek, F.} \emph{et~al.}
\newblock \bibinfo{title}{Optimizing practical entanglement distillation}.
\newblock \emph{\bibinfo{journal}{Phys. Rev. A}}
  \textbf{\bibinfo{volume}{97}}, \bibinfo{pages}{062333}
  (\bibinfo{year}{2018}).

\bibitem{krastanov2019optimized}
\bibinfo{author}{Krastanov, S.}, \bibinfo{author}{Albert, V.~V.} \&
  \bibinfo{author}{Jiang, L.}
\newblock \bibinfo{title}{Optimized entanglement purification}.
\newblock \emph{\bibinfo{journal}{Quantum}} \textbf{\bibinfo{volume}{3}},
  \bibinfo{pages}{123} (\bibinfo{year}{2019}).

\bibitem{Rains2001}
\bibinfo{author}{Rains, E.~M.}
\newblock \bibinfo{title}{{A semidefinite program for distillable
  entanglement}}.
\newblock \emph{\bibinfo{journal}{IEEE Trans. Inf. Theory}}
  \textbf{\bibinfo{volume}{47}}, \bibinfo{pages}{2921--2933}
  (\bibinfo{year}{2000}).

\bibitem{Matthews2008}
\bibinfo{author}{Matthews, W.} \& \bibinfo{author}{Winter, A.}
\newblock \bibinfo{title}{{Pure-state transformations and catalysis under
  operations that completely preserve positivity of partial transpose}}.
\newblock \emph{\bibinfo{journal}{Phys. Rev. A}}
  \textbf{\bibinfo{volume}{78}}, \bibinfo{pages}{012317}
  (\bibinfo{year}{2008}).

\bibitem{Wang2016}
\bibinfo{author}{Wang, X.} \& \bibinfo{author}{Duan, R.}
\newblock \bibinfo{title}{{Improved semidefinite programming upper bound on
  distillable entanglement}}.
\newblock \emph{\bibinfo{journal}{Phys. Rev. A}}
  \textbf{\bibinfo{volume}{94}}, \bibinfo{pages}{050301}
  (\bibinfo{year}{2016}).

\bibitem{Fang2017}
\bibinfo{author}{Fang, K.}, \bibinfo{author}{Wang, X.},
  \bibinfo{author}{Tomamichel, M.} \& \bibinfo{author}{Duan, R.}
\newblock \bibinfo{title}{{Non-asymptotic Entanglement Distillation}}.
\newblock \emph{\bibinfo{journal}{IEEE Trans. Inf. Theory}}
  \textbf{\bibinfo{volume}{65}}, \bibinfo{pages}{6454--6465}
  (\bibinfo{year}{2019}).

\bibitem{Wang2016c}
\bibinfo{author}{Wang, X.} \& \bibinfo{author}{Duan, R.}
\newblock \bibinfo{title}{{Nonadditivity of Rains' bound for distillable
  entanglement}}.
\newblock \emph{\bibinfo{journal}{Phys. Rev. A}}
  \textbf{\bibinfo{volume}{95}}, \bibinfo{pages}{062322}
  (\bibinfo{year}{2017}).

\bibitem{Audenaert2003}
\bibinfo{author}{Audenaert, K.}, \bibinfo{author}{Plenio, M.~B.} \&
  \bibinfo{author}{Eisert, J.}
\newblock \bibinfo{title}{{Entanglement Cost under
  Positive-Partial-Transpose-Preserving Operations}}.
\newblock \emph{\bibinfo{journal}{Phys. Rev. Lett.}}
  \textbf{\bibinfo{volume}{90}}, \bibinfo{pages}{027901}
  (\bibinfo{year}{2003}).

\bibitem{Wang2016d}
\bibinfo{author}{Wang, X.} \& \bibinfo{author}{Duan, R.}
\newblock \bibinfo{title}{{Irreversibility of Asymptotic Entanglement
  Manipulation Under Quantum Operations Completely Preserving Positivity of
  Partial Transpose}}.
\newblock \emph{\bibinfo{journal}{Phys. Rev. Lett.}}
  \textbf{\bibinfo{volume}{119}}, \bibinfo{pages}{180506}
  (\bibinfo{year}{2017}).

\bibitem{Regula2019}
\bibinfo{author}{Regula, B.}, \bibinfo{author}{Fang, K.},
  \bibinfo{author}{Wang, X.} \& \bibinfo{author}{Gu, M.}
\newblock \bibinfo{title}{{One-shot entanglement distillation beyond local
  operations and classical communication}}.
\newblock \emph{\bibinfo{journal}{New J. Phys.}}
  \textbf{\bibinfo{volume}{21}}, \bibinfo{pages}{103017}
  (\bibinfo{year}{2019}).

\bibitem{Chitambar2017}
\bibinfo{author}{Chitambar, E.}, \bibinfo{author}{de~Vicente, J.~I.},
  \bibinfo{author}{Girard, M.~W.} \& \bibinfo{author}{Gour, G.}
\newblock \bibinfo{title}{{Entanglement manipulation beyond local operations
  and classical communication}}.
\newblock \emph{\bibinfo{journal}{J. Math. Phys.}}
  \textbf{\bibinfo{volume}{61}}, \bibinfo{pages}{042201}
  (\bibinfo{year}{2020}).

\bibitem{Wang2017d}
\bibinfo{author}{Wang, X.}, \bibinfo{author}{Fang, K.} \&
  \bibinfo{author}{Duan, R.}
\newblock \bibinfo{title}{{Semidefinite Programming Converse Bounds for Quantum
  Communication}}.
\newblock \emph{\bibinfo{journal}{IEEE Trans. Inf. Theory}}
  \textbf{\bibinfo{volume}{65}}, \bibinfo{pages}{2583--2592}
  (\bibinfo{year}{2019}).

\bibitem{Wang2020c}
\bibinfo{author}{Wang, X.} \& \bibinfo{author}{Wilde, M.~M.}
\newblock \bibinfo{title}{{Cost of Quantum Entanglement Simplified}}.
\newblock \emph{\bibinfo{journal}{Phys. Rev. Lett.}}
  \textbf{\bibinfo{volume}{125}}, \bibinfo{pages}{040502}
  (\bibinfo{year}{2020}).

\bibitem{Ruan2018}
\bibinfo{author}{Ruan, L.}, \bibinfo{author}{Dai, W.} \& \bibinfo{author}{Win,
  M.~Z.}
\newblock \bibinfo{title}{{Adaptive recurrence quantum entanglement
  distillation for two-Kraus-operator channels}}.
\newblock \emph{\bibinfo{journal}{Phys. Rev. A}}
  \textbf{\bibinfo{volume}{97}}, \bibinfo{pages}{052332}
  (\bibinfo{year}{2018}).

\bibitem{Bae2015}
\bibinfo{author}{Bae, J.} \& \bibinfo{author}{Kwek, L.-C.}
\newblock \bibinfo{title}{{Quantum state discrimination and its applications}}.
\newblock \emph{\bibinfo{journal}{J. Phys. A: Math. Theor.}}
  \textbf{\bibinfo{volume}{48}}, \bibinfo{pages}{083001}
  (\bibinfo{year}{2015}).

\bibitem{reviewQSD}
\bibinfo{author}{Barnett, S.~M.} \& \bibinfo{author}{Croke, S.}
\newblock \bibinfo{title}{{Quantum state discrimination}}.
\newblock \emph{\bibinfo{journal}{Adv. Opt. Photon.}}
  \textbf{\bibinfo{volume}{1}}, \bibinfo{pages}{238--278}
  (\bibinfo{year}{2009}).

\bibitem{Li2015a}
\bibinfo{author}{Li, K.}
\newblock \bibinfo{title}{{Discriminating quantum states: the multiple Chernoff
  distance}}.
\newblock \emph{\bibinfo{journal}{Ann. Statist.}}
  \textbf{\bibinfo{volume}{44}}, \bibinfo{pages}{1661--1679}
  (\bibinfo{year}{2016}).

\bibitem{DiVincenzo2002}
\bibinfo{author}{DiVincenzo, D.~P.}, \bibinfo{author}{Leung, D.~W.} \&
  \bibinfo{author}{Terhal, B.~M.}
\newblock \bibinfo{title}{{Quantum data hiding}}.
\newblock \emph{\bibinfo{journal}{IEEE Trans. Inf. Theory}}
  \textbf{\bibinfo{volume}{48}}, \bibinfo{pages}{580--598}
  (\bibinfo{year}{2002}).

\bibitem{Gallego2010}
\bibinfo{author}{Gallego, R.}, \bibinfo{author}{Brunner, N.},
  \bibinfo{author}{Hadley, C.} \& \bibinfo{author}{Ac{\'{i}}n, A.}
\newblock \bibinfo{title}{{Device-independent tests of classical and quantum
  dimensions}}.
\newblock \emph{\bibinfo{journal}{Phys. Rev. Lett.}}
  \textbf{\bibinfo{volume}{105}}, \bibinfo{pages}{230501}
  (\bibinfo{year}{2010}).

\bibitem{Eldar2003}
\bibinfo{author}{Eldar, Y.}
\newblock \bibinfo{title}{{A semidefinite programming approach to optimal
  unambiguous discrimination of quantum states}}.
\newblock \emph{\bibinfo{journal}{IEEE Trans. Inf. Theory}}
  \textbf{\bibinfo{volume}{49}}, \bibinfo{pages}{446--456}
  (\bibinfo{year}{2003}).

\bibitem{Sun2002}
\bibinfo{author}{Sun, X.}, \bibinfo{author}{Zhang, S.}, \bibinfo{author}{Feng,
  Y.} \& \bibinfo{author}{Ying, M.}
\newblock \bibinfo{title}{{Mathematical nature of and a family of lower bounds
  for the success probability of unambiguous discrimination}}.
\newblock \emph{\bibinfo{journal}{Phys. Rev. A}}
  \textbf{\bibinfo{volume}{65}}, \bibinfo{pages}{44306} (\bibinfo{year}{2002}).

\bibitem{Jezek2002}
\bibinfo{author}{Je{\v{z}}ek, M.}, \bibinfo{author}{Řeh{\'{a}}{\v{c}}ek, J.}
  \& \bibinfo{author}{Fiur{\'{a}}{\v{s}}ek, J.}
\newblock \bibinfo{title}{{Finding optimal strategies for minimum-error
  quantum-state discrimination}}.
\newblock \emph{\bibinfo{journal}{Phys. Rev. A}}
  \textbf{\bibinfo{volume}{65}}, \bibinfo{pages}{60301} (\bibinfo{year}{2002}).

\bibitem{Yu2014a}
\bibinfo{author}{Yu, N.}, \bibinfo{author}{Duan, R.} \& \bibinfo{author}{Ying,
  M.}
\newblock \bibinfo{title}{{Distinguishability of Quantum States by Positive
  Operator-Valued Measures With Positive Partial Transpose}}.
\newblock \emph{\bibinfo{journal}{IEEE Trans. Inf. Theory}}
  \textbf{\bibinfo{volume}{60}}, \bibinfo{pages}{2069--2079}
  (\bibinfo{year}{2014}).

\bibitem{Nielsen2010}
\bibinfo{author}{Nielsen, M.~A.} \& \bibinfo{author}{Chuang, I.~L.}
\newblock \emph{\bibinfo{title}{{Quantum Computation and Quantum Information: 10th Anniversary Edition}}}
  (\bibinfo{publisher}{Camb. Univ. Press}, \bibinfo{place}{Cambridge},
  \bibinfo{year}{2010}).

\bibitem{Wilde2017book}
\bibinfo{author}{Wilde, M.~M.}
\newblock \emph{\bibinfo{title}{{Quantum Information Theory}}}
  (\bibinfo{publisher}{Camb. Univ. Press},
  \bibinfo{address}{Cambridge}, \bibinfo{year}{2017}).

\bibitem{Watrous2011b}
\bibinfo{author}{Watrous, J.}
\newblock \emph{\bibinfo{title}{{The Theory of Quantum Information}}}
  (\bibinfo{publisher}{Camb. Univ. Press}, \bibinfo{place}{Cambridge},
  \bibinfo{year}{2018}).

\bibitem{Pirandola2018}
\bibinfo{author}{Pirandola, S.} \emph{et~al.}
\newblock \bibinfo{title}{{Theory of channel simulation and bounds for private
  communication}}.
\newblock \emph{\bibinfo{journal}{Quantum Sci. Technol.}}
  \textbf{\bibinfo{volume}{3}}, \bibinfo{pages}{035009} (\bibinfo{year}{2018}).

\bibitem{Pirandola2017}
\bibinfo{author}{Pirandola, S.} \& \bibinfo{author}{Lupo, C.}
\newblock \bibinfo{title}{{Ultimate Precision of Adaptive Noise Estimation}}.
\newblock \emph{\bibinfo{journal}{Phys. Rev. Lett.}}
  \textbf{\bibinfo{volume}{118}}, \bibinfo{pages}{100502}
  (\bibinfo{year}{2017}).

\bibitem{Pirandola2020}
\bibinfo{author}{Pirandola, S.} \emph{et~al.}
\newblock \bibinfo{title}{{Advances in quantum cryptography}}.
\newblock \emph{\bibinfo{journal}{Adv. Opt. Photon.}}
  \textbf{\bibinfo{volume}{12}}, \bibinfo{pages}{1012} (\bibinfo{year}{2020}).

\bibitem{Pirandola2015}
\bibinfo{author}{Pirandola, S.}, \bibinfo{author}{Eisert, J.},
  \bibinfo{author}{Weedbrook, C.}, \bibinfo{author}{Furusawa, A.} \&
  \bibinfo{author}{Braunstein, S.~L.}
\newblock \bibinfo{title}{{Advances in quantum teleportation}}.
\newblock \emph{\bibinfo{journal}{Nat. Photonics}}
  \textbf{\bibinfo{volume}{9}}, \bibinfo{pages}{641--652}
  (\bibinfo{year}{2015}).

\bibitem{Horodecki1999}
\bibinfo{author}{Horodecki, M.}, \bibinfo{author}{Horodecki, P.} \&
  \bibinfo{author}{Horodecki, R.}
\newblock \bibinfo{title}{{General teleportation channel, singlet fraction, and
  quasidistillation}}.
\newblock \emph{\bibinfo{journal}{Phys. Rev. A}}
  \textbf{\bibinfo{volume}{60}}, \bibinfo{pages}{1888--1898}
  (\bibinfo{year}{1999}).

\bibitem{Choi1975}
\bibinfo{author}{Choi, M.-D.}
\newblock \bibinfo{title}{{Completely positive linear maps on complex
  matrices}}.
\newblock \emph{\bibinfo{journal}{Linear Algebra Appl.}}
  \textbf{\bibinfo{volume}{10}}, \bibinfo{pages}{285--290}
  (\bibinfo{year}{1975}).

\bibitem{Chirolli2018}
\bibinfo{author}{Chirolli, L.} \& \bibinfo{author}{Burkard, G.}
\newblock \bibinfo{title}{{Decoherence in solid-state qubits}}.
\newblock \emph{\bibinfo{journal}{Adv. Phys.}}
  \textbf{\bibinfo{volume}{57}}, \bibinfo{pages}{225--285}
  (\bibinfo{year}{2008}).

\bibitem{Briegel1998}
\bibinfo{author}{Briegel, H.-J.}, \bibinfo{author}{D{\"{u}}r, W.},
  \bibinfo{author}{Cirac, J.~I.} \& \bibinfo{author}{Zoller, P.}
\newblock \bibinfo{title}{{Quantum Repeaters: The Role of Imperfect Local
  Operations in Quantum Communication}}.
\newblock \emph{\bibinfo{journal}{Phys. Rev. Lett.}}
  \textbf{\bibinfo{volume}{81}}, \bibinfo{pages}{5932--5935}
  (\bibinfo{year}{1998}).

\end{thebibliography}
\end{document}